\documentclass[11pt]{article} 
\usepackage[sectionbib]{natbib}
\usepackage{array,epsfig,fancyheadings,rotating,setspace}
\usepackage[]{hyperref}  
\usepackage{sectsty, secdot}
\sectionfont{\fontsize{12}{14pt plus.8pt minus .6pt}\selectfont}
\renewcommand{\theequation}{\thesection\arabic{equation}}
\subsectionfont{\fontsize{12}{14pt plus.8pt minus .6pt}\selectfont}
\doublespacing
\usepackage[lmargin=1in,rmargin=1in,tmargin=1in]{geometry}

\RequirePackage{graphicx,subfigure,latexsym,amssymb}
\RequirePackage{float,epsfig,multirow,rotating,times}
\RequirePackage{upgreek,wrapfig}
\usepackage{amsmath}
\usepackage{amssymb}
\usepackage{amsfonts}
\usepackage{multirow}
\usepackage{amsthm}
\usepackage{algorithm}
\usepackage[noend]{algpseudocode}
\usepackage{mathrsfs}
\usepackage{epstopdf}
\setcounter{page}{1}
\newtheorem{theorem}{Theorem}
\newtheorem{lem}{Lemma}
\newtheorem{corollary}{Corollary}
\newtheorem{prop}{Proposition}

\def\argmax{\mathop{\rm argmax}}

\newcommand{\s}{\ensuremath{\mathbb{S}}}

\newcommand{\real}{\ensuremath{\mathbb{R}}}

\newcommand{\ltwo}{\ensuremath{\mathbb{L}^2}}
\newcommand{\lone}{\ensuremath{\mathbb{L}^1}}
\newcommand{\linf}{\ensuremath{\mathbb{L}^{\infty}}}

\newcommand{\inner}[2]{\left\langle #1,#2 \right\rangle}

\def\T{{ \mathrm{\scriptscriptstyle T} }}
\newcommand{\abs}[1]{\left\vert#1\right\vert}

\newcommand{\norm}[1]{\left\Vert#1\right\Vert}

\usepackage{psfrag}
\usepackage{mathrsfs}
\usepackage{accents}
\usepackage{authblk}
\usepackage{blindtext}

\pagestyle{fancy}

\pagestyle{fancy}

\lhead[\fancyplain{} \leftmark]{}
\chead[]{}
\rhead[]{\fancyplain{}\rightmark}
\cfoot{}


\begin{document}


\markright{ \hbox{\footnotesize\rm 
}\hfill\\[-13pt]
\hbox{\footnotesize\rm
}\hfill }

\markboth{\hfill{\footnotesize\rm Dasgupta S., Pati D. and Srivastava A.} \hfill}
{\hfill {\footnotesize\rm Dasgupta S., Pati D. and Srivastava A.} \hfill}

\renewcommand{\thefootnote}{}
$\ $\par


\fontsize{12}{14pt plus.8pt minus .6pt}\selectfont \vspace{0.8pc}
\centerline{\large\bf A Two-Step Geometric Framework For Density Modeling}
\vspace{.4cm} \centerline{Sutanoy Dasgupta, Debdeep Pati, and Anuj Srivastava} \vspace{.4cm} \centerline{\it
Department of Statistics, Florida State University} \vspace{.55cm} \fontsize{9}{11.5pt plus.8pt minus
.6pt}\selectfont


\begin{quotation}
\noindent {\it Abstract:}
We introduce a novel two-step approach for estimating a probability density function ({\it pdf}) given its samples, 
with the second and important step coming from a geometric formulation. 
The procedure involves obtaining an initial estimate of the {\it pdf}
and then transforming it via a warping function to reach the final estimate. 
The initial estimate is intended to be computationally fast, albeit suboptimal, but its warping
creates a larger, flexible class of density functions, resulting in substantially improved estimation.  The search for 
optimal warping is accomplished by 
mapping diffeomorphic functions to the tangent space of a Hilbert sphere, a vector space whose elements 
can be expressed using an orthogonal basis. Using a truncated basis expansion, we 
estimate the optimal warping under a (penalized) likelihood criterion and, thus, the optimal density estimate. This framework is introduced for univariate, 
unconditional {\it pdf} estimation and then extended to conditional {\it pdf} estimation.
The approach avoids many of the computational pitfalls associated with classical conditional-density estimation methods, without losing on estimation performance.
We derive asymptotic convergence rates of the density estimator and demonstrate this approach using 
both synthetic datasets and real data, the latter relating to the association of a toxic metabolite on preterm birth.

\vspace{9pt}
\noindent {\it Key words and phrases:}
conditional density; density estimation; warped density; Hilbert sphere; sieve estimation; tangent space; weighted likelihood maximization
\par
\end{quotation}\par

\def\thefigure{\arabic{figure}}
\def\thetable{\arabic{table}}

\renewcommand{\theequation}{\thesection.\arabic{equation}}

\fontsize{12}{14pt plus.8pt minus .6pt}\selectfont


\setcounter{equation}{0} 
\section{Introduction}

Estimating a probability density function ({\it pdf}) 
 is an important and well studied field of research in statistics. The most basic problem in this area is that of univariate {\it pdf} estimation from {\it iid} samples, henceforth referred to as unconditional density estimation.
Another problem of significance is conditional density estimation. Here one needs to characterize the behavior of the response variable for 
different values of the predictors. 

Given the importance of {\it pdf} estimation in statistics and related disciplines, 
a large number of solutions have been proposed for each of these problems. While the earliest works focused on parametric solutions,  
the trend over the last three decades has been to use a nonparametric approach as it
minimizes making assumptions about the underlying density (and the relationships 
between variables for conditional and joint densities). 
The most common nonparametric techniques are kernel based; 
please refer to \citet{rosenblatt1956remarks,hall1991optimal,sheather1991reliable,li2007nonparametric} for a narrative of works. Related to these approaches are ``tilting'' or ``data sharpening'' techniques for unconditional density estimation, 
see for example \citet{hjort1995nonparametric,doosti2016making}, 
and the references therein. Kernel methods are very powerful in univariate setting. However, as the number of variables involved gets higher, 
these methods tend to be computationally inefficient because of the complexities involved in bandwidth selection, especially in conditional density estimation setup.

\subsection{Two-Step Approaches for Density Estimation}
Another common approach for {\it pdf} estimation, and the one pursued in the current paper, 
is a two-step estimation procedure discussed in \citet{leonard1978density,lenk1988logistic,lenk1991towards,tokdar2010bayesian,tokdar2007towards}, etc.
In the first step,  one estimates
an initial {\it pdf}, say $f_p$, from the data, perhaps restricting to a parametric family.
Then, in the second step, one {\it improves} upon this 
estimate by forming a function $w> 0$, 
that depends on the initial estimate $f_p$, and forming a final estimate using $w(x) f_p(x) /\int_y w(y) f_p(y)yd$. 
Thus, the second step involves estimation of an optimal $w$ in order to estimate the overall {\it pdf}.
In a Bayesian context, the function $w$ is  often assigned a Gaussian process prior.  
While this approach is quite comprehensive,  
the calculation of the normalization constant makes the computation very cumbersome. 
The two-step procedures can also be adapted for estimating conditional density functions: first estimate 
the conditional mean function and then estimate the conditional density of the residuals, as is done in \citet{hansen2004nonparametric}. 
Over the recent years, Bayesian methods for estimating {\it pdfs} based on mixture models and latent variables have received a lot of attention, 
primarily due to their excellent practical performances and an increasingly rich 
set of algorithmic tools for sampling posterior using Markov Chain Monte Carlo (MCMC) methods. References include \citet{escobar1995bayesian,muller1996bayesian,maceachern1998estimating,kalli2011slice,jain2012split,kundu2014latent,bhattacharya2010latent} among others.  
 However, these results also come at a very high computational cost typically associated with the MCMC algorithms. 
 Applications of flexible Bayesian  models for conditional densities are discussed in
\citet{MacEachern:99,DeIorioMullerRosnerMacEachern:04,GriffinSteel:06,DunsonPillaiPark:07,ChungDunson:09,NoretsPelenis2012}, 
among others.  Although the literature suggests that such methods based on  mixture models have several attractive properties,  
they lack interpretability and  the MCMC solutions for model fitting are overly complicated and expensive.

\subsection{A Geometric Two-Step Approach}

In this article, we pursue a 
geometric, two-step approach 
that is applicable to both conditional and unconditional density estimation. The main 
motivation here is develop an efficient estimation procedure while retaining good estimation performance.  The 
main difference from the previously described two-step procedure is that the transformation of $f_p$ (in the second 
step) is now based on the action of a diffeomorphism group, as follows. 
Let $f_p$ be a strictly positive univariate density on the interval $[0,1]$; 
$f_p$ serves as an initial estimate of the {\it pdf}. 
Let $\Gamma$ be the set of all positive diffeomorphisms from $[0,1]$ to itself, i.e.
$\Gamma=\{\gamma | \gamma \text{ is differentiable},{\gamma}^{-1}\text{ is differentiable, }\dot{\gamma}>0,\gamma(0)=0,\gamma(1)=1\}$.
The elements of  $\Gamma$ play the role of warping functions, or transformations of $f_p$. 
Given a $\gamma \in \Gamma$, the transformation of $f_{p}$ is defined by: 
$( f_{p}, \gamma) =  (f_{p} \circ \gamma) \dot{\gamma}$.
Henceforth, this transformation is referred to as {\it warping} of $f_p$, 
and the resulting {\it pdf} $f$ as a {\it warped density}. This mapping is comprehensive in the sense that one can  
go from any positive {\it pdf} to any other positive {\it pdf} using an appropriate $\gamma$.  
Note that since $\int_{0}^1 f_{p}(\gamma(x)) \dot{\gamma}(x) dx = 1$,
there is no need to normalize this transformation. However, 
the difficulty of estimating the normalizing constant now shifts to the problem of 
estimating over $\Gamma$ and this poses some challenges as $\Gamma$ is a nonlinear
manifold. 
Note that the use of diffeomorphisms as transformations of a {\it pdf} have been used in the past, 
albeit with a different setup and scope; see, for example \citet{saoudi1994non,saoudi1997some}. 
Also, the notion of transformation 
between {\it pdf}s  has been used in the literature on {\em optimal transport} as in \citet{tabak2013family,tabak2014data}, with the difference being that  
the transport is achieved using an iterated composition of maps and not through an optimization over $\Gamma$ as done in the current paper.   
There are two parts to this paper: 
\begin{enumerate}

\item {\bf Univariate {\it pdf} Estimation}: 
We start the paper with a framework for estimating an unconditional, univariate {\it pdf} defined on $[0,1]$. 
This simple setting helps explain and illustrate the main ingredients of the framework. 
Besides, the proposed geometric framework is naturally univariate in the sense that the transformation defined 
earlier acts on univariate density shapes, making it a logical starting point for developments. 
In this simple setup, the approach delivers excellent performance while avoiding heavy computational cost, and 
is comparable to standard kernel methods, even at very low sample sizes.
The framework is then extended to univariate densities with unknown support by scaling the observation domain to $[0,1]$. 
A defining characteristic of this
warping transformation is that the initial estimate can be constructed in anyway -- parametric (e.g. gaussian) or nonparametric (e.g. kernel estimate), and is allowed to be a sub-optimal estimate of the true density. 

\item {\bf Conditional Density Estimation}: 
The second part of the article focuses on extending the framework to estimation of conditional density $f(y|x)$ from  $\{(y_i, x_i): i=1,\ldots, n,y \in \real , x \in \real^d,d\geq 1\}$.
The approach is to  start with a nonparametric mean regression model of the form 
$y_i =  m(x_i) + \epsilon_i$, $\epsilon_i \sim {\cal N}(0,\sigma^2)$, where $m(\cdot)$ is estimated using a standard 
nonparametric estimator, to obtain an initial conditional density estimate $f_{p,x} \equiv {\cal N}(\hat{m}(x),\hat{\sigma}^2)$ at the location $x$.
Then $f_{p,x}$ is warped using a warping function $\gamma_x$ into a final conditional density estimate.
Naturally, the choice of $\gamma_x \in \Gamma$ varies with the predictor $x$. 
The selection of $\gamma_x$ is based on a weighted-likelihood objective function 
that borrows information from the neighborhood of the location $x$ at which the conditional density is being evaluated.
\end{enumerate}

The main contributions of this paper as as follows: 
\begin{enumerate}
\item {\bf Avoids Normalizing Constant}: It introduces a geometric approach to two-step estimation, with the second step being based on 
the action of the diffeomoprhism group on the set of positive {\it pdfs}. This action is chosen so that one does not need a
normalization constant, and the resulting estimation process is efficient. 

\item {\bf Uses Geometry of $\Gamma$}: It uses the differential geometry of $\Gamma$ to map its elements into a subset of 
a Hilbert space, allowing for a basis expansion and application of standard optimization tools for estimating warping functions.  

\item {\bf Conditional Density Estimation}: It leads to an efficient framework for estimating conditional densities, providing very competitive practical performance and improved computational cost compared to standard kernel techniques.

\end{enumerate}

The rest of this paper is organized as follows. 
Section 2 outlines the general framework for a univariate unconditional density estimation while
Section 3 presents an asymptotic analysis of this estimator.  Section 4 contains some simulation study. Section 5 develops 
theory for conditional density estimation and illustrates properties of 
the proposed method using simulated datasets. Applications of conditional density estimation using 
the proposed framework on a real dataset are also presented.



\setcounter{equation}{0} 
\section{Proposed Framework}
In this section we develop a two-step framework for estimating univariate, unconditional {\it pdf}, and 
start by introducing some notations. 
Let $\mathscr{F}$ be the set of all strictly positive, univariate probability density functions on $[0,1]$.
Let $p_0 \in \mathscr{F}$ denote the underlying true density and 
$x_i \sim p_0$, $i=1,2,\dots, n$ be independent samples from $p_0$. 
Furthermore, let $\mathscr{F}_p$  be a pre-determined subset of $\mathscr{F}$, 
such that an optimal element (based on likelihood or any other desired criterion)) $f_{p} \in \mathscr{F}_p$  is relatively easy to compute. 
For instance, any parametric family with a simple maximum-likelihood estimator is a good candidate for $f_{p}$. 
Similarly, kernel density estimates are also good since 
they are computationally efficient and robust in univariate setups.

\begin{figure}
\begin{center}
\begin{tabular}{|c|c|}
\hline
\includegraphics[height=2.0in]{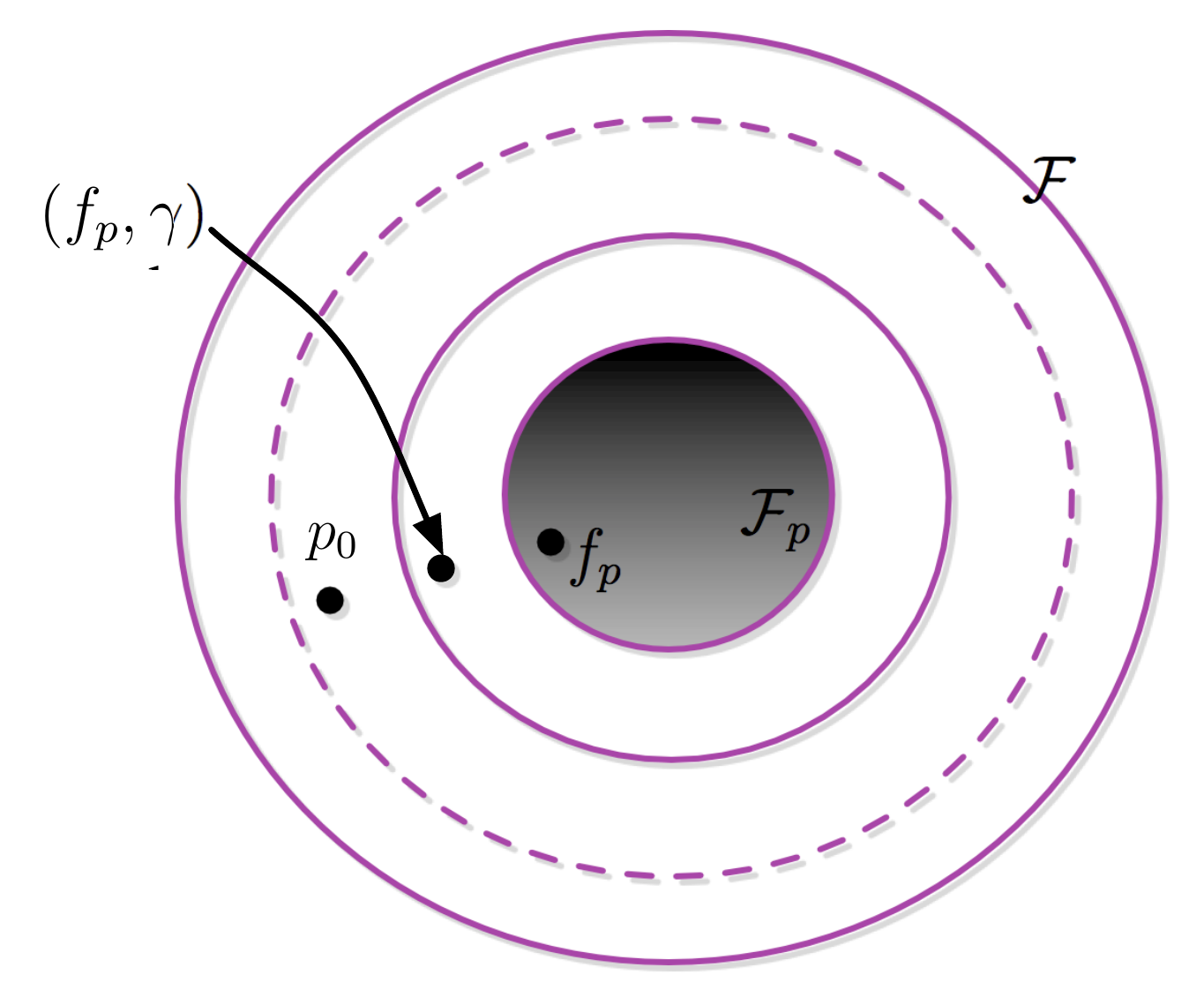} &
\includegraphics[height=2.0in]{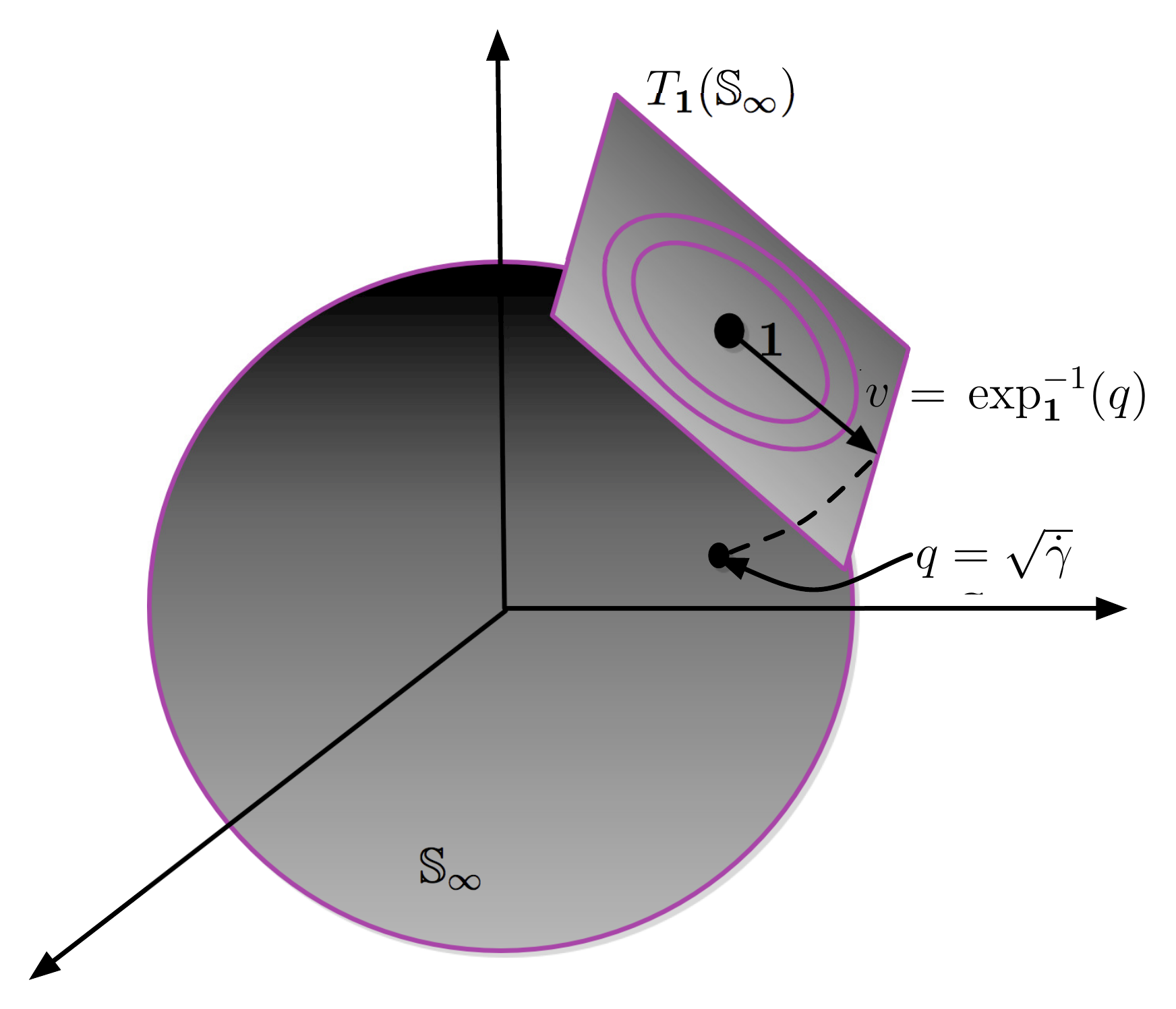}\\
\hline
\end{tabular}
\caption{\it Left: The true pdf $p_0$ is estimated by transforming an initial estimate $f_p$ by 
the warping function $\gamma$. The larger the set of allowed $\gamma$s, the better the estimate is. 
Right: Representing warping function $\gamma$ as element of the tangent space $T_{\bf 1}(\s_{\infty}^+)$.}
\label{fig:cartoon}
\end{center}
\end{figure}

Next, we define a warping-based transformation of elements of $\mathscr{F}_p$, using elements of $\Gamma$ defined earlier. 
Note that $\Gamma$ is an infinite-dimensional manifold that 
has a group structure under composition as the group operation. That is, for any 
$\gamma_1, \gamma_2 \in \Gamma$, the composition $\gamma_1 \circ \gamma_2 \in \Gamma$. 
The identity element of $\Gamma$ is given by $\gamma_{\mathrm{id}}(t) = t$, and for every $\gamma \in \Gamma$, 
there is a function $\gamma^{-1} \in \Gamma$ such that $\gamma \circ \gamma^{-1} = \gamma_{\mathrm{id}}$. 
For any $f_{p} \in \mathscr{F}_p$ and $\gamma \in \Gamma$, define the mapping 
$(f_{p}, \gamma) = (f_{p} \circ \gamma) \dot{\gamma}$ as  
given earlier. The importance of this mapping comes from the following result. 
\begin{prop}
The mapping $\mathscr{F} \times \Gamma \to \mathscr{F}$, specified above,  forms an action of $\Gamma$ on $\mathscr{F}$.
Furthermore, this action is transitive.  In other words, one can reach any element of $\mathscr{F}$, 
from any other element of $\mathscr{F}$ using an appropriate element of $\Gamma$.
\end{prop}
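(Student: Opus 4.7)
The proposition has two parts: verifying the action axioms, and establishing transitivity. My plan is to handle them in that order, since transitivity is the substantive content while the action verification is mostly a chain-rule bookkeeping exercise.

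For the action axioms, I would first check that the mapping indeed sends $\mathscr{F} \times \Gamma$ into $\mathscr{F}$. Positivity of $(f_p \circ \gamma)\dot{\gamma}$ follows from $f_p > 0$ and $\dot{\gamma} > 0$, and the change-of-variables substitution $u = \gamma(x)$ together with $\gamma(0)=0,\ \gamma(1)=1$ gives $\int_0^1 f_p(\gamma(x))\dot{\gamma}(x)\,dx = \int_0^1 f_p(u)\,du = 1$, so the image is a valid pdf on $[0,1]$. Next, the identity axiom is immediate because $\gamma_{\mathrm{id}}(t)=t$ gives $\dot{\gamma}_{\mathrm{id}} \equiv 1$, so $(f_p,\gamma_{\mathrm{id}}) = f_p$. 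The compatibility axiom $(f_p,\gamma_1\circ\gamma_2) = ((f_p,\gamma_1),\gamma_2)$ reduces to the chain rule: $(\gamma_1\circ\gamma_2)^{\cdot}(x) = \dot{\gamma}_1(\gamma_2(x))\,\dot{\gamma}_2(x)$, and plugging this into both sides yields $f_p(\gamma_1(\gamma_2(x)))\,\dot{\gamma}_1(\gamma_2(x))\,\dot{\gamma}_2(x)$ either way.

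The main step is transitivity. Given $f_1,f_2 \in \mathscr{F}$, I want to exhibit an explicit $\gamma \in \Gamma$ with $(f_1,\gamma) = f_2$, i.e.\ $f_1(\gamma(x))\,\dot{\gamma}(x) = f_2(x)$. The natural candidate is built from the cumulative distribution functions. Since each $f_i$ is strictly positive and continuous on $[0,1]$, the corresponding CDF $F_i(x) = \int_0^x f_i(u)\,du$ is a strictly increasing $C^1$ bijection from $[0,1]$ onto $[0,1]$ with positive derivative, hence a positive diffeomorphism. Setting $\gamma := F_1^{-1} \circ F_2$, I would verify that $\gamma \in \Gamma$: it is a composition of diffeomorphisms, satisfies $\gamma(0)=0$, $\gamma(1)=1$, and $\dot{\gamma}(x) = f_2(x)/f_1(F_1^{-1}(F_2(x))) = f_2(x)/f_1(\gamma(x)) > 0$. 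Substituting into the action yields $(f_1,\gamma)(x) = f_1(\gamma(x)) \cdot f_2(x)/f_1(\gamma(x)) = f_2(x)$, completing the argument.

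The only step that requires any care is confirming that $F_i^{-1}$ is truly differentiable with positive derivative on all of $[0,1]$, which is where the strict positivity assumption on elements of $\mathscr{F}$ is essential; without it, $F_i^{-1}$ could fail to be a diffeomorphism at points where $f_i$ vanishes. This is the only genuine obstacle, and it is handled by the standing hypothesis. Everything else is routine substitution.
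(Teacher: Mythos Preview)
Your proof is correct and follows essentially the same approach as the paper: verify the identity and compatibility axioms via the chain rule, then construct the transitivity witness as $\gamma = F_1^{-1}\circ F_2$ using the CDFs. Your write-up is in fact more thorough, since you also check explicitly that the image lies in $\mathscr{F}$ and spell out the derivative computation for $\gamma$.
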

\noindent {\bf Proof}: We can verify the two properties in the definition of a group action: 
(1) For any $\gamma_1, \gamma_2 \in \Gamma$ and $f \in \mathscr{F}$, we have 
$((f, \gamma_1), \gamma_2) = (((f \circ \gamma_1) \dot{\gamma}_1) \circ \gamma_2) \dot{\gamma}_2 = 
(f, \gamma_1 \circ \gamma_2)$. (2) For any $f \in \mathscr{F}$, $(f, \gamma_{\mathrm{id}}) = f$. 
To show transitivity, we need to show that given any $f_1, f_2 \in \mathscr{F}$, there exists a $\gamma \in \Gamma$, 
such that $(f_1, \gamma) = f_2$. If $F_1$ and $F_2$ denote the cumulative distribution functions associated with 
$f_1$ and $f_2$, respectively, then the desired $\gamma$ is simply $F_{1}^{-1} \circ F_2$. Since $f_1$ is strictly positive, 
$F_1^{-1}$ is well defined and $\gamma$ is uniquely specified. Furthermore, since $f_2$ is strictly positive, we have $\dot{\gamma}>0$
and $\gamma \in \Gamma$. $\Box$

This result implies that together the pair $(f_{p}, \gamma)$ spans the full set $\mathscr{F}$, if $\gamma$ is 
chosen freely from $\Gamma$.  However, if one uses a proper
submanifold of $\Gamma$, instead of the full $\Gamma$, we may not reach the desired $p_0$ but only approximate it in some way. 
This intuition is depicted pictorially in the left panel of Figure \ref{fig:cartoon} where the inner disk denotes the set $\mathscr{F}_p$. The increasing 
rings around $\mathscr{F}_p$ represent the set $\{ (f_{p}, \gamma) | f_{p} \in \mathscr{F}_p\}$ with $\gamma$ belonging to progressively larger dimensional submanifolds of $\Gamma$. As the submanifolds approach the full space $\Gamma$, the corresponding approximation approaches $p_0$. The submanifolds are introduced formally in the next subsection. 
More details are also included in Section 6.1( Supplementary Materials).

\subsection{Finite-Dimensional Representation of Warping Functions} 
Given an initial estimate, the focus now shifts to the search for an optimal $\gamma$ such that the
warped density $(f_p \circ \gamma) \dot{\gamma}$ becomes the final estimate under the chosen criterion. 
However, solving sn optimization over $\Gamma$ faces two main challenges. 
First, $\Gamma$ is a nonlinear manifold, and second, it is infinite-dimensional. We handle the nonlinearity by forming 
a bijective map from $\Gamma$ to a tangent space of the unit Hilbert sphere $\s_{\infty}$ (the tangent space is a vector space), 
and infinite dimensionality by 
selecting a finite-dimensional subspace of this tangent space. 
Together, these two steps are equivalent to finding a  family of finite-dimensional submanifolds of $\Gamma$ that can be  {\it flattened} into vector spaces. 
This allows for a representation of $\gamma$ using elements of a Euclidean vector space and an application of standard optimization procedures. 

To locally flatten $\Gamma$, we define a function $q: [0,1] \to \real$, $q(t) = \sqrt{\dot{\gamma}(t)}$, termed the {\it square-root slope function} (SRSF)
of $\gamma \in \Gamma$. (For a discussion on SRSFs of general functions, please refer to Chapter 4 of 
\citet{srivastava2016functional}). 
For any $\gamma \in \Gamma$, its SRSF $q$ is an element of the interior of the positive orthant of  the unit Hilbert sphere $\s_{\infty} \subset \ltwo$, 
denoted by  $\s_{\infty}^+$. 
This is because
$\|q\|^2 = \int_0^1 q(t)^2 dt = \int_0^1 \dot{\gamma}(t) dt = \gamma(1) - \gamma(0) = 1$.
We have a positive orthant, boundaries excluded, because by definition $q$ is a strictly positive function. The mapping between $\Gamma$ and 
$\s_{\infty}^+$ is a bijection, with its inverse given by $\gamma(t) = \int_0^t q(s)^2 ds$. 
The set $\s_{\infty}$ is a smooth manifold with known geometry under the $\ltwo$ Riemannian metric 
\citet{lang2012fundamentals}. Although is not a vector space,  it can be easily flattened 
into a vector space (locally) due to its constant curvature. A natural choice for flattening is the 
vector space tangent to $\s_{\infty}^+$ at the point ${\bf 1}$, 
which a constant function with value $1$. (${\bf 1}$ is the SRSF corresponding to $\gamma=\gamma_{\mathrm{id}}(t) = t$.)
The tangent space of $\s_{\infty}^{+}$ at ${\bf 1}$
is an infinite-dimensional vector space given by: 
$T_{{\bf 1}}(\s_{\infty}^+) = \{ v \in \ltwo([0,1],\real) | \int_0^1 v(t) dt = \inner{v}{{\bf 1}}= 0\}$. See the right panel of Fig. 
\ref{fig:cartoon} for an illustration of this idea. 
Next, we define a mapping that takes an arbitrary element of $\s_{\infty}^+$ to this tangent space. For this {\it retraction}, 
we will use the inverse
exponential map; it takes $q \in \s_{\infty}^+$ to $T_{{\bf 1}}(\s_{\infty}^+)$  according to:  
\begin{equation}
\exp^{-1}_{{\bf 1}} (q) :\s_{\infty}^+ \to T_{\bf 1}(\s_{\infty}^+),\ \ \ 
v= \exp^{-1}_{{\bf 1}} (q)= {\theta \over \sin(\theta)} (q- {\bf 1} \cos(\theta ))\ ,
 \end{equation} 
 where $\theta =\cos^{-1}(\inner{{\bf 1}}{q})$ is the arc-length from $q$ to ${\bf 1}$. 
 The right panel of Fig. \ref{fig:cartoon} also shows the mapping from $\s_{\infty}^+$
 to $T_{{\bf 1}}(\s_{\infty}^+)$.

We impose a natural Hilbert structure on 
$T_{{\bf 1}}(\s_{\infty}^+)$ using
 the standard inner product: $\inner{v_1}{v_2} = \int_0^1 v_1(t)v_2(t) dt$. It is easy to check that since $q \in \s_{\infty}^+, \theta =\cos^{-1}(\inner{{\bf 1}}{q})<\pi/4$, 
 and hence $\|v\|=\sqrt{ \int_0^1 v(t)^2 dt}=\theta <\pi/4$, where $v= \exp^{-1}_{{\bf 1}} (q)$. 
Thus, the range of the inverse exponential map is not the entire $T_{{\bf 1}}(\s_{\infty}^+)$, but an 
open subset $T_{{\bf 1}}^0(\s_{\infty}^+)=\{ v \in T_{{\bf 1}}(\s_{\infty}^+): \|v\| <\pi/4\}$.
Further, we can select any orthogonal basis  ${\cal B} = \{ b_j, j=1,2,\dots\}$ of the Hilbert space $T_{{\bf 1}}(\s_{\infty}^+) $ to express
its elements $v$ by their corresponding coefficients; 
that is,  $v(t) = \sum_{j=1}^{\infty} c_j b_j(t)$, where $c_j= \inner{v}{b_j}$. The only restriction on the basis elements $b_j$'s is that they must be orthogonal to {\bf 1}, that is, $\inner{b_j}{{\bf 1}}=0$. 
In order to map points back from the tangent space to the Hilbert sphere, we use the exponential map, 
given by: 
\begin{equation}
\exp (v) :T_{\bf 1}(\s_{\infty}^+) \to\s_{\infty} ,\ \ \ 
 \exp(v)= \cos(\|v\|){\bf 1} + {\sin(\|v\|) \over \|v\|}\ .
 \end{equation} 
 If we restrict the domain of the exponential map to 
the subset $T_{\bf 1}^0(\s_{\infty}^+)$, then the range of this map is $\s_{\infty}^+$.
Using these two steps, we specify the finite-dimensional, therefore approximate,
representation of warpings. We define a composite map $H: \Gamma \to \real^J$, illustrated in Figure \ref{fig:cartoon3}, as
\begin{equation}
\gamma  \in \Gamma ~~~~~\xrightarrow{\mbox{SRSF}} ~~~~~q = \sqrt{\dot{\gamma}}  \in \s_{\infty}^+ ~~~~~~~
\xrightarrow{\exp^{-1}_{\bf 1}} ~~~~
v 
\in T_{{\bf 1}}^0(\s_{\infty}^+)  ~~~~ \xrightarrow{ \{b_j\}} ~~~~ 
\{c_j = \inner{v}{b_j} \} \in \real^J \ .
\label{eq:representation}
\end{equation}
The range of $H$ is $V_{\pi}^J = \{c \in \real^J : \| \sum_{j=1}^{J} c_j b_j\| < \pi/4\} \subset \real^J$. Now, we define $G:\real^J \to \Gamma$, as
\begin{align}
\{c_j\} \in \real^J \xrightarrow{\{b_j\}} ~ v = \sum_{j=1}^J c_j b_j  \in T_{{\bf 1}}(\s_{\infty}^+)~ \xrightarrow{\exp_{\bf 1}} ~q = 
\exp_{\bf 1}(v) ~\xrightarrow{~}~   \gamma(t) = \int_0^t q(s)^2 ds\ .
\end{align}
If we restrict the domain of $G$ to $V_{\pi}^J$, then $G$ is invertible and its inverse is $H$. 
Restricting our focus to only the set $V_{\pi}^J$, rather than the entire space $\real^J$, 
we identify the function $G$ as $H^{-1}$.
For any $c \in V_{\pi}^J$, let $\gamma_c$ denote the diffeomorphism $H^{-1}(c)$.
For any fixed $J$, the set $H^{-1}(V_{\pi}^J)$ is a $J$-dimensional submanifold of $\Gamma$,and we pose the estimation 
problem on this submanifold. 
As $J$ goes to infinity, this submanifold converges to the full group $\Gamma$.


\begin{figure}
\begin{center}
\includegraphics[height=2.78in]{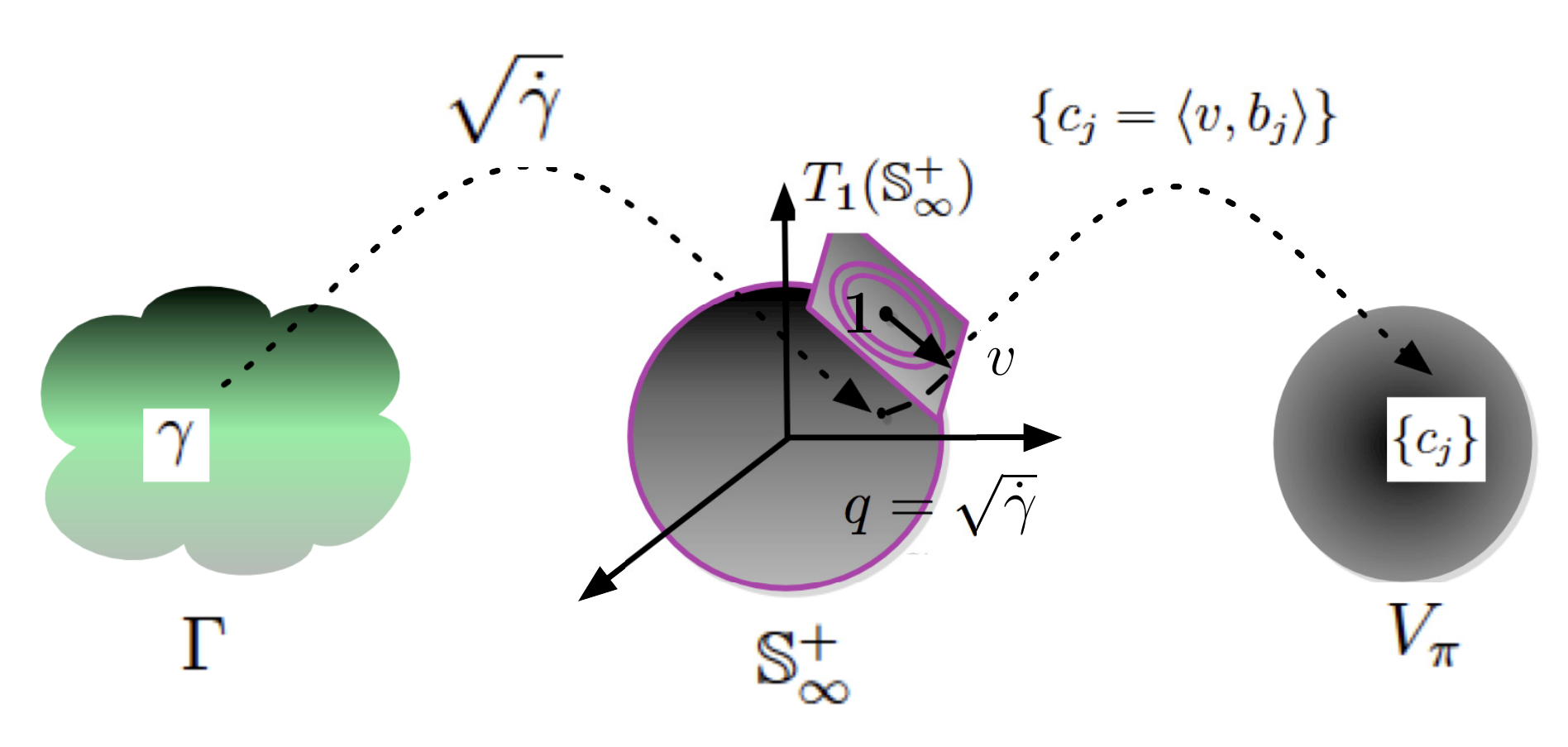}
\caption{A graphic representation of Eqn. \ref{eq:representation} leading to a bijective map between 
$\Gamma$ and $V_{\pi}^J$.}
\label{fig:cartoon3}
\end{center}
\end{figure}

With this setting, we can rewrite the estimation of the unknown density $p_0$, given an initial estimate $f_p$, as
$\hat{f}(t)= f_p(\gamma_{\hat{c}} (t))\dot{\gamma}_{\hat{c}}(t), t\in [0,1]$, where $\gamma_{\hat{c}} = H^{-1}(\hat{c})$ and 
\begin{equation}
\hat{c} =  \argmax_{c \in V_{\pi}^J} \left(  \sum_{i=1}^n \Bigg[ \log \left(f_p \left(\gamma_c(x_i)\right)\dot{\gamma}_c(x_i)\right) \Bigg] \right)\ .
\label{eq:opt}
\end{equation}
The truncated basis approximation takes place in the tangent space representation of 
$\Gamma$, rather than in the original space as is the case in \citet{birge1998minimum}, \citet{donoho1996density} and several others. 
The tangent space approximation is superior because it is a flat space whereas $\Gamma$ or $\s_{\infty}^+$ are not flat. \\

\noindent {\bf Choice of Basis Functions}: 
Now that we are in a Hilbert space $T_{\bf 1}(\s_{\infty})$, we
 can choose from a wide range of basis elements. For example, 
one can use the Fourier basis elements (excluding ${\bf 1}$ of course).
However, other bases such as splines and Legendre polynomials can also be used. 
In the experimental studies,  we demonstrate an example using the Meyer wavelets that have  
attractive properties of infinite differentiability and support over all reals. \citet{vermehren2015close} provides 
a closed-form expression for Meyer wavelets and scale function in the time domain, 
which enables us to use the basis set for representation. 
However, Meyer wavelets are not naturally orthogonal to ${\bf 1}$ and so they need to be orthogonalized first but that
can be done offline.

\subsection{Advantages Over Direct Approximations}

In the previous section, we have used the geometry of $\Gamma$ to develop a natural, local flattening of $\Gamma$. 
Other, seemingly simpler, choices are also possible but at some cost in estimation performance. For instance,
since any $\gamma$ can also be viewed as a nonnegative function in $\ltwo$ with appropriate 
constraints, it may be tempting to use $\gamma(t) = \sum_{j=1}^{\infty} c_j b_j(t)$, for some orthogonal basis  ${\cal B} = \{ b_j, j=1,2,\dots\}$ of $\ltwo[0,1]$ as in \citet{hothorn2015most}. This seems easier than our approach as it avoids going through a nonlinear transformations. 
However, the fundamental issue with such an approach is that 
$\Gamma$ is a nonlinear manifold and one cannot technically express and estimate elements of $\Gamma$
directly using linear representations. \citet{hothorn2015most} uses Bernstein polynomials, with monotonically increasing coefficients, 
to represent elements of $\Gamma$. However, one does not reach 
the entire set $\Gamma$ using such a representation. To be specific, it is easy to find a significant 
subset of $\Gamma$ whose elements cannot be represented in this system. 
As a simple example, consider a 
$\gamma= \sum_{i=0}^{4} c_i B_{i,4}$ with $c_0=0$, $c_1=0.4$, $c_2=0.3$,$c_3 = 0.5$, $c_4 = 1$ (not satisfying the monotonicity constraint). 
Here, $B_{i,4}$ refer to the Bernstein basis elements of order $4$.
Even though this $\gamma$ is a proper diffeomorphism, it  cannot be represented in the system used by \citet{hothorn2015most}.

Another issue in directly approximating element of $\Gamma$ that
 both $\gamma$ and $\dot{\gamma}$ are present the final estimate and one needs a good approximation of both of these functions. 
 However, a good approximation of $\gamma$ does not imply a good approximation of $\dot{\gamma}$. In contrast, the reverse holds true
 as shown next. 
\begin{prop}
For any  $\gamma \in \Gamma$, let $\dot{\gamma}_{\mathrm{app}}$ be an approximation of $\dot{\gamma}$,
 and let $\gamma_{\mathrm{app}}$ be the integral of $\dot{\gamma}_{\mathrm{app}}$. For all $x_0 \in (0,1]$ consider intervals $I_{x_0}$ of the form $[0,x_0]$. Then, on all intervals $I_{x_0}$, ${\|\gamma - \gamma_{\mathrm{app}}\|}_{\infty} \leq {\|\dot{\gamma} - \dot{\gamma}_{\mathrm{app}}\|}_{\infty}$.
\end{prop}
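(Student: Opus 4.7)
The plan is to use the fundamental theorem of calculus to turn the pointwise difference $\gamma - \gamma_{\mathrm{app}}$ into an integral of $\dot\gamma - \dot\gamma_{\mathrm{app}}$, and then bound that integral crudely by the sup norm of the integrand times the interval length, using the fact that every $I_{x_0}\subseteq[0,1]$ has length at most $1$.

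First I would fix the normalization. Since $\gamma\in\Gamma$ satisfies $\gamma(0)=0$ and $\gamma_{\mathrm{app}}$ is \emph{defined} as the integral of $\dot\gamma_{\mathrm{app}}$, I may (and will) take $\gamma_{\mathrm{app}}(0)=0$ as well; any other constant of integration would make the statement false, so this is really the content of the hypothesis ``$\gamma_{\mathrm{app}}$ is the integral of $\dot\gamma_{\mathrm{app}}$''. Then for every $x\in I_{x_0}=[0,x_0]$,
\begin{equation*}
\gamma(x) - \gamma_{\mathrm{app}}(x) \;=\; \int_0^x \bigl(\dot\gamma(t) - \dot\gamma_{\mathrm{app}}(t)\bigr)\,dt.
\end{equation*}

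Next I would apply the triangle inequality for integrals and bound the integrand by its supremum on $I_{x_0}$, obtaining
\begin{equation*}
\bigl|\gamma(x)-\gamma_{\mathrm{app}}(x)\bigr| \;\le\; \int_0^x \bigl|\dot\gamma(t)-\dot\gamma_{\mathrm{app}}(t)\bigr|\,dt \;\le\; x\,\|\dot\gamma-\dot\gamma_{\mathrm{app}}\|_{\infty,I_{x_0}} \;\le\; x_0\,\|\dot\gamma-\dot\gamma_{\mathrm{app}}\|_{\infty,I_{x_0}}.
\end{equation*}
Since $x_0\in(0,1]$, the factor $x_0$ is at most $1$, so the right-hand side is bounded by $\|\dot\gamma-\dot\gamma_{\mathrm{app}}\|_{\infty,I_{x_0}}$. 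Taking the supremum over $x\in I_{x_0}$ on the left yields the claim.

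There is no real obstacle here: the statement is essentially the standard ``integration is a contraction in sup norm on intervals of length at most one'' and the only thing to be slightly careful about is the implicit choice of integration constant for $\gamma_{\mathrm{app}}$, which is pinned down by matching $\gamma_{\mathrm{app}}(0)=\gamma(0)=0$. Worth flagging in a one-line remark is that the inequality is sharp (take $\dot\gamma_{\mathrm{app}}\equiv \dot\gamma-c$ and $x_0=1$) and that the reverse inequality fails in general, motivating the paper's choice to approximate $\dot\gamma$ (equivalently, the SRSF $q=\sqrt{\dot\gamma}$) rather than $\gamma$ itself.
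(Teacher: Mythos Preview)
Your proof is correct and follows essentially the same route as the paper's: write $\gamma(x)-\gamma_{\mathrm{app}}(x)=\int_0^x(\dot\gamma-\dot\gamma_{\mathrm{app}})$, apply the triangle inequality, bound by $x\,\|\dot\gamma-\dot\gamma_{\mathrm{app}}\|_\infty$, and use $x\le x_0\le 1$. Your explicit remark that $\gamma_{\mathrm{app}}(0)=0$ is the intended integration constant is a useful clarification the paper leaves implicit.
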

 
 \noindent {\bf Proof}: 
 Let $t \in I_{x_0}. |\gamma(t) -\gamma_{\mathrm{app}}(t)| = |\int_0^t \dot{\gamma}(s) ds -\int_0^t \dot{\gamma}_{\mathrm{app}}(s) ds| \leq \int_0^t |\dot{\gamma}(s) - \dot{\gamma}_{\mathrm{app}}(s)| ds \leq {\|\dot{\gamma} - \dot{\gamma}_{\mathrm{app}}\|}_{\infty}.t \leq {\|\dot{\gamma} - \dot{\gamma}_{\mathrm{app}}\|}_{\infty}.x_0 \leq {\|\dot{\gamma} - \dot{\gamma}_{\mathrm{app}}\|}_{\infty} $
   $\Box$
This proposition states that a good approximation of $\dot{\gamma}$
ensures a good approximation of $\gamma$, and supports our approach of approximating $\gamma$ 
via the inverse exponential transformation of its SRSF to the tangent space $T_{{\bf 1}}(\s_{\infty}^+)$. 
On the other hand,  a direct approximation of $\gamma$ will needs many more basis elements to ensure a good approximation of $\dot{\gamma}$.

\subsection{Estimation of Densities with Unknown Support}
 \label{boundary}
So far we have restricted to the 
interval $[0,1]$ for representing a {\it pdf}. However, the framework extends naturally to {\it pdf}s with unknown support. 
For that, we simply scale the observations to $[0,1]$ and carry out the original procedure.
Let $X_1,X_2, \dots , X_n \sim p_0$, where $X_i$s are $n$ independent observations from a density $p_0$ 
with an unknown support. We transform the data as
$Y_i=\frac{X_i - A}{B-A}$, 
where $A$ and $B$ are the estimated boundaries of the density. Following \citet{turnbull2014unimodal}, we take $A=X_{(1)}-s_{X}/\sqrt{n}$, and  $A=X_{(n)}+s_{X}/\sqrt{n}$,
 where $X_{(1)}$ and $X_{(n)}$ are the first and last order statistics of X, and $s_{X}$ is the sample standard deviation of the observed samples. Using the scaled data, we can find the estimated {\it pdf} $f_w$ on $[0,1]$ and then undo the scaling to reach the
final solution.
\citet{turnbull2014unimodal} provide a justification for the choice of $A$ and $B$ as the estimates for the bounds of the density. They also discuss
an alternate way of estimating the boundaries using ideas presented in \citet{de2011confidence}, and suggest that the Carvalho method produces wider and more conservative boundary estimates. 

Finally, using the fact that any piecewise continuous density function, with support $\mathbb{R}$ and range $\mathbb{R}_{\ge 0}$ , can be approximated
to any desired degree by a strictly positive density function on some bounded interval $[A,B]$ (under $\ltwo$ norm, for example) , we can extend our method to this 
larger class of functions.





\setcounter{equation}{0} 
\section{Asymptotic Analysis and Convergence Rate Bounds}
We have represented an arbitrary {\it pdf} as a function of the coefficients {\it w.r.t} a basis set of the tangent space. 
We note that in order to represent the entire space $\mathscr{F}$, we need a Hilbert basis with infinitely many elements. 
However, in practice, we use only a finite number $J$ of basis elements. Hence, we are actually optimizing over a subset of the space of density functions based on only a few basis elements and using it to approximate the true density. This subset is called the 
{\it approximating space}. Since we are performing maximum likelihood estimation over an approximating space for {\it pdf}s, 
our estimation is  akin to the sieve MLE, discussed in \citet{wong1995probability}.

First, we introduce some notations. 
Recall that $\mathscr{F}$ is  the space of all univariate, strictly positive {\it pdf}s  on $[0,1]$ and zero elsewhere. 
Let $\mathscr{F}_n$ be the approximating space of $\mathscr{F}$ when using $J=k_n$ basis elements for the tangent space
$T_{{\bf 1}}(\s_{\infty}^+)$, where $k_n$ is some function of the number of observations $n$.
\label{fnnotation}
Let $f_p \in \mathscr{F}_p \subset \mathscr{F}$ be an initial estimate, and let
$\mathscr{F}_n=\{f_p (\gamma) \dot{\gamma},  \gamma = H^{-1}(c))|\ \ c \in V_{\pi}^J \subset {\mathbb{R}}^{k_n}\}$, where $H$ and $V_{\pi}^J$ are defined in Section 2.1.   
As $n \rightarrow \infty, k_n \rightarrow \infty$. So $\mathscr{F}_n \rightarrow \mathscr{F}$ as $n \rightarrow \infty$.
Let $\eta_n$ be a sequence of positive numbers converging to 0. Let $\mathscr{Y}^{(n)}$ be the space of $n$ observed points. We call an estimator $\hat{p}:\mathscr{Y}^{(n)} \rightarrow  \mathscr{F}_n$ an $\eta_n$ sieve MLE if
\[
\frac{1}{n}\sum_{i=1}^{n} \log \hat{p}(Y_i) \geq  \underset{p \in \mathscr{F}_n}{\text{sup}} \frac{1}{n}\sum_{i=1}^{n}\log p(Y_i) -\eta_n
\]
In the proposed method, the estimated {\it pdf}  is exactly $ \underset{p \in \mathscr{F}_n}{\text{sup}} \frac{1}{n}\sum_{i=1}^{n}\log p(Y_i)$. 
Therefore, this estimate is a sieve MLE with $\eta_n \equiv 0$. 
Let $p_0$  denote the true density which is assumed to belong a H\"{o}lder space of order $\beta>0$.   By the equivalence of the pdf space and the coefficient space of expansion of $\gamma$ (refer to Appendix S1.1), it is straightforward to show that if $k_n=l_1n^{1/(2\beta +1)}$ then $\inf_{f \in \mathcal{F}_n} \| p_0 - f\|_\infty \leq l_2 n^{-\beta/(2\beta+1)}$ for some arbitrary constants $l_1$ and $l_2$ .  This follows from standard approximation results in $\ltwo$ basis (e.g. Fourier) of  H\"{o}lder functions of order $\beta$. For a detailed discussion please refer to \citet{triebel2006theory}. 

To control the approximation error, \citet{wong1995probability} introduces a family of discrepancies. They define $\delta_n (p_0,\mathscr{F}_n)=\text{inf}_{f \in \mathscr{F}_n} \rho (p_0,f)$, called the $\rho$-approximation error at $p_0$. 
The control of the approximation error of $\mathscr{F}_n$ at $p_0$ is necessary for obtaining results on the convergence rate for sieve MLEs.
We follow \citet{wong1995probability} to introduce a family of indexes of discrepency in order to formulate the condition on the approximation error of $\mathscr{F}_n$.
Let
\[
g_{\alpha} (x) = \left\{\begin{array}{lr}(1/\alpha)[x^{\alpha} -1], -1<\alpha<0 \text{ or }0<\alpha \leq 1\\
\log{x}, \text{  if }\alpha=0+
\end{array}
\right.
\]
Set $x=p/f$ and define $\rho_{\alpha} (p,f)=E_pg_{\alpha}(X)=\int pg_{\alpha}(p/f).$
We define $\delta_n (\alpha) =\inf _{f \in \mathscr{F}_n} \rho_{\alpha} (p_0,f)$. We use $\alpha=1$ for our results. Then $\delta_n(1)=\int {(p_0 -f)}^2/f $.

The $\delta$-cover of a set $T$ wrt a metric $\rho$ is a set $\{\Theta^{1},\dots,\Theta^{N}\}\subset T$ such that for each $\Theta \in T$, there exists some $i\in \{1,\dots,N\}$ with $\rho(\Theta,\Theta_i)\leq \delta$. The covering number $N$ is the cardinality of the smallest delta cover. Then $\log(N)$ is the metric entropy for $T$. The following Lemma provides a bound for the Hellinger metric entropy for $\mathscr{F}_n$.

\begin{lem}
There exists positive constants $C_3$ and $C_4$ and a positive $\epsilon<1$ such that, 
\begin{equation}
\int_{{\epsilon}^2/2^8}^{\sqrt{2}\epsilon} {H}^{1/2} (\frac{u}{C_3},\mathscr{F}_n) du \leq C_4n^{1/2}{\epsilon}^2,
\end{equation}
\label{hellinger}
\end{lem}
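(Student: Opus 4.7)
The plan is to reduce the Hellinger metric entropy of $\mathscr{F}_n$ to the Euclidean covering number of the coefficient set $V_\pi^J \subset \real^{k_n}$ via a Lipschitz estimate, and then evaluate the entropy integral directly.

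\textbf{Step 1 (Lipschitz bound from coefficients to densities).} For $c_1, c_2 \in V_\pi^J$, let $\gamma_i = H^{-1}(c_i)$, $q_i = \sqrt{\dot\gamma_i}$, and $p_i = (f_p \circ \gamma_i)\dot\gamma_i$. I would show
\[
h(p_1, p_2) \;\leq\; L\, \|c_1 - c_2\|_2
\]
for a constant $L$ depending only on uniform bounds of $f_p$ and $f_p'$ (recall $f_p \in \mathscr{F}_p$ is strictly positive). Writing $\sqrt{p_i} = \sqrt{f_p \circ \gamma_i}\, q_i$ and adding and subtracting $\sqrt{f_p \circ \gamma_1}\, q_2$ splits $h(p_1, p_2)$ into a ``Jacobian part'' controlled by $\|q_1 - q_2\|_{L^2}$ and a ``base-point part'' controlled by $\|\gamma_1 - \gamma_2\|_\infty$. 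By Proposition 2 the latter is itself bounded by $\|q_1^2 - q_2^2\|_\infty$, which on the bounded subset $T_{\bf 1}^0(\s_\infty^+)$ is comparable to $\|q_1 - q_2\|_{L^2}$ (since $\exp_{\bf 1}$ is smooth with derivative bounded on the compact closure of $V_\pi^J$). Finally, $c \mapsto q_c = \exp_{\bf 1}(\sum c_j b_j)$ is Lipschitz in $\|\cdot\|_2$ on $V_\pi^J$ because the basis $\{b_j\}$ is orthonormal in $L^2$ and $\exp_{\bf 1}$ is smooth on the bounded set $T_{\bf 1}^0$.

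\textbf{Step 2 (Euclidean covering and Hellinger entropy).} Because $\{b_j\}$ is orthonormal, $V_\pi^J$ lies in the Euclidean ball of radius $\pi/4$ in $\real^{k_n}$, so a standard volumetric argument gives
\[
N\!\left(\delta,\; V_\pi^J,\; \|\cdot\|_2\right) \;\leq\; \left(\tfrac{C_0}{\delta}\right)^{k_n}.
\]
Combining with Step~1 by setting $\delta = u/L$ yields $N(u, \mathscr{F}_n, h) \leq (C_1/u)^{k_n}$, i.e.\ $H(u, \mathscr{F}_n) \leq k_n \log(C_1/u)$ for all $0 < u < 1$.

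\textbf{Step 3 (Evaluating the integral).} Substituting the entropy bound and using monotonicity of the integrand,
\[
\int_{\epsilon^2/2^8}^{\sqrt{2}\epsilon} H^{1/2}\!\left(\tfrac{u}{C_3},\mathscr{F}_n\right) du \;\leq\; \sqrt{k_n}\int_{0}^{\sqrt{2}\epsilon}\sqrt{\log(C_3 C_1/u)}\,du \;\leq\; C\,\sqrt{k_n}\,\epsilon\,\sqrt{\log(1/\epsilon)}.
\]
Requiring the right-hand side to be $\leq C_4 n^{1/2}\epsilon^2$ amounts to choosing $\epsilon \geq \epsilon_n$ with $\epsilon_n^2 \asymp k_n \log n / n$. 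Under the choice $k_n = l_1 n^{1/(2\beta+1)}$ stated in the paper, this is satisfied (up to a $\sqrt{\log n}$ factor, which can be absorbed) by $\epsilon \asymp n^{-\beta/(2\beta+1)}$, giving the claim.

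\textbf{Main obstacle.} Step~1 is the crux: producing a clean, \emph{uniform} Lipschitz constant $L$ that does not blow up as one approaches the boundary of $V_\pi^J$. Two subtleties appear: (i) the exponential map $\exp_{\bf 1}$ has bounded derivative only because we restricted to the open set $T_{\bf 1}^0$ (where $\theta < \pi/4$), so the bound must be stated for the \emph{closure} of $V_\pi^J$ truncated away from $\theta = \pi/4$; (ii) control of the base-point term requires $f_p$ to be Lipschitz and bounded below, which should be built into the definition of $\mathscr{F}_p$. Once Step~1 is secure, Steps~2 and~3 are essentially routine.
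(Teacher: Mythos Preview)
Your proposal is correct and follows essentially the same strategy as the paper: reduce the Hellinger entropy of $\mathscr{F}_n$ to a Euclidean covering number of the bounded coefficient set $V_\pi^J \subset \real^{k_n}$, obtain the bound $H(u,\mathscr{F}_n)\le k_n\log(C/u)$, and then evaluate the resulting integral. The only minor difference is that the paper establishes the coefficient-to-density control as a H\"older-$\tfrac12$ estimate $H(f_1,f_2)\le l_1\sqrt{\|c_1-c_2\|_\infty}$ (via a pointwise bound $|f_1-f_2|\le M_0\|c_1-c_2\|_1$ followed by $H\le\sqrt{\|f_1-f_2\|_1}$) and then covers a hypercube in the sup-norm, whereas you aim for a direct Lipschitz bound in $\|\cdot\|_2$ on $\sqrt{p}$ and cover a Euclidean ball---both routes yield the same entropy estimate and the same integral computation.
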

The following corollary provides a uniform exponential bound on likelihood ratio surfaces and follows from Lemma \ref{hellinger} due to Theorem $1$ of\cite{wong1995probability}.
\begin{corollary}\label{thm:1}
If Lemma \ref{hellinger} holds, there exists positive constants $C_1$ and $C_2$ such that for any $\epsilon>0$, 
\[
P^{*}\bigg( \underset{\{{\|p^{1/2}-p_{0}^{1/2}\|}_2 \geq \epsilon , p\in \mathscr{F}_n\}}{\text{sup}} \prod_{i=1}^{n} p(Y_i)/p_0(Y_i) \geq \text{exp}(-C_1n{\epsilon}^2)\bigg)\leq 4 \text{ exp}(-C_2n{\epsilon}^2)
\]

\end{corollary}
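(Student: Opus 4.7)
The plan is to invoke Theorem 1 of \citet{wong1995probability} as a black box, since the corollary is an almost verbatim translation of that theorem into our notation; the real content lies entirely in Lemma \ref{hellinger}. The Wong--Shen theorem asserts that whenever a class of densities $\mathscr{F}_n$ satisfies a bracketing/covering entropy integral bound of precisely the form
\[
\int_{\epsilon^2/2^8}^{\sqrt{2}\epsilon} H^{1/2}\bigl(u/C_3, \mathscr{F}_n\bigr)\, du \leq C_4 \, n^{1/2}\epsilon^2,
\]
there exist universal constants $C_1, C_2 > 0$, depending only on $C_3$ and $C_4$, for which the uniform likelihood-ratio tail bound displayed in the corollary holds with Hellinger separation $\epsilon$. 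Our first step is simply to write down that theorem in the exact form stated in \citet{wong1995probability}.

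Second, I would verify that all hypotheses of Wong--Shen are met in our setting: (i) $Y_1, \ldots, Y_n$ are i.i.d.\ from $p_0$, (ii) every $p \in \mathscr{F}_n$ is a bona fide density on $[0,1]$ (which follows from the construction $p = (f_p \circ \gamma_c)\dot\gamma_c$ and Proposition 1, since this map sends $V_\pi^J$ into $\mathscr{F}$), and (iii) the entropy integral condition holds by Lemma \ref{hellinger} with the same constants $C_3, C_4, \epsilon$. The outer probability $P^*$ is used only to sidestep measurability issues in the supremum, so no separate measurability check is needed.

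Third, I would match the Hellinger notations. Wong and Shen phrase their bound in terms of the Hellinger distance $h(p, p_0)$, while our corollary phrases it in terms of $\|p^{1/2}-p_0^{1/2}\|_2$; these differ only by the convention of the $\sqrt{2}$ factor, which may be absorbed into a harmless redefinition of $C_1$ and $C_2$. Once this identification is made, the conclusion of Wong--Shen yields exactly the inequality asserted in the corollary.

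The main (and only) obstacle is conceptual rather than computational: one must confirm that the class $\mathscr{F}_n$ in our construction fits the framework of \citet{wong1995probability} and that the entropy integral in Lemma \ref{hellinger} is stated against the right metric (the $L^2$ metric on square-root densities, up to the constant $C_3$). Since Lemma \ref{hellinger} is expressly formulated with the Wong--Shen integral in mind, this check is routine and the corollary follows without further argument.
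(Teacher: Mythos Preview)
Your proposal is correct and follows exactly the paper's approach: the paper's own proof is the single sentence ``The corollary follows directly from Theorem 1 in \citet{wong1995probability},'' and your proposal simply spells out the routine verifications implicit in that citation. There is nothing to add.
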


\begin{lem}
There exists a positive constant $C_5$ such that $\delta_n(1) = C_5n^{-2\beta/(2\beta+1)}$.
\end{lem}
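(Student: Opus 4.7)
\medskip
\noindent \textbf{Proof Proposal.} My plan is to derive the rate on $\delta_n(1)$ directly from the $L^\infty$ approximation result stated just before Lemma~1, by controlling the denominator in the definition $\delta_n(1) = \inf_{f\in \mathscr{F}_n}\int (p_0-f)^2/f$ using strict positivity of the approximating density. The overall structure is: pick the near-minimizer from the $L^\infty$ bound, show it is bounded away from zero uniformly on $[0,1]$ for large $n$, and then trade the pointwise squared error for the integrated discrepancy.

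\medskip
\noindent \emph{Step 1 (Import the $L^\infty$ approximation).} By the discussion immediately preceding Lemma~1, with $k_n = l_1 n^{1/(2\beta+1)}$ basis elements, there exists $f^\star_n \in \mathscr{F}_n$ such that
\begin{equation*}
\|p_0 - f^\star_n\|_\infty \leq l_2\, n^{-\beta/(2\beta+1)}.
\end{equation*}
Since $\delta_n(1)$ is an infimum, it suffices to upper bound $\int (p_0 - f^\star_n)^2/f^\star_n$.

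\medskip
\noindent \emph{Step 2 (Uniform lower bound on $f^\star_n$).} Since $p_0 \in \mathscr{F}$ is strictly positive and continuous on the compact interval $[0,1]$ (it lies in a H\"older class), there exists $m_0 > 0$ with $p_0(t) \geq m_0$ for all $t \in [0,1]$. Combined with Step 1, for $n$ sufficiently large so that $l_2 n^{-\beta/(2\beta+1)} \leq m_0/2$, we have
\begin{equation*}
f^\star_n(t) \geq p_0(t) - \|p_0 - f^\star_n\|_\infty \geq m_0/2,\qquad t \in [0,1].
\end{equation*}

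\medskip
\noindent \emph{Step 3 (Trade pointwise error for the integrated discrepancy).} Using the lower bound from Step 2 and the $L^\infty$ bound from Step 1,
\begin{equation*}
\int_0^1 \frac{(p_0(t) - f^\star_n(t))^2}{f^\star_n(t)}\, dt \;\leq\; \frac{2}{m_0}\, \|p_0 - f^\star_n\|_\infty^2 \;\leq\; \frac{2 l_2^2}{m_0}\, n^{-2\beta/(2\beta+1)}.
\end{equation*}
Setting $C_5 = 2 l_2^2/m_0$ yields $\delta_n(1) \leq C_5\, n^{-2\beta/(2\beta+1)}$, which is the conclusion (the claimed equality should be read as the sieve-theoretic rate, i.e.\ an upper bound of this order, which is all that is needed for the subsequent convergence rate argument).

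\medskip
\noindent \emph{Main obstacle.} The only nontrivial point is justifying that the approximating element $f^\star_n$ inherits strict positivity uniformly in $n$. This is not automatic from an $L^\infty$ approximation of a positive function in the abstract, but here it follows cleanly because (i) $p_0$ attains a strictly positive minimum on $[0,1]$ and (ii) the $L^\infty$ error tends to zero. Everything else is an elementary manipulation using that for densities close in sup-norm with a uniform positive lower bound, the $\chi^2$-type discrepancy $\int (p_0-f)^2/f$ is controlled by the square of the sup-norm error.
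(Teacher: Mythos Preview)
Your argument is correct and follows the same overall strategy as the paper: bound the $\chi^2$-type discrepancy $\int (p_0-f)^2/f$ by $\|p_0-f\|_\infty^2/\min_t f(t)$ for a near-optimal $f\in\mathscr{F}_n$, and then invoke the $L^\infty$ approximation rate $l_2 n^{-\beta/(2\beta+1)}$.

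The one substantive difference is in how the uniform positive lower bound on the approximant is obtained. The paper argues from the geometric construction of $\mathscr{F}_n$: any $f=f_p(\gamma)\dot\gamma$ with $\|v\|<\pi/4$ has $\dot\gamma>0$ on $[0,1]$, hence attains a strictly positive minimum, and $f_p$ is strictly positive as well. You instead borrow the lower bound from $p_0$ itself, using $f^\star_n\geq p_0-\|p_0-f^\star_n\|_\infty\geq m_0/2$ for large $n$. Your route has the advantage that the lower bound $m_0/2$ is a fixed constant depending only on $p_0$, automatically uniform in $n$; the paper's lower bound $d=M_1\dot\gamma(t_0)$ in principle depends on the particular $\gamma$ (and hence on $n$ through the near-minimizer), and the paper does not explicitly check that it stays bounded away from zero as $n$ grows. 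In that sense your argument is the cleaner of the two, though both reach the same conclusion.
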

The following theorem  provides convergence rates of the sieve estimators.   
\begin{theorem}\label{thm:2} Under the assumptions listed  above,
let $C_1, \dots, C_4$, be as in Lemma \ref{hellinger} and Corollary \ref{thm:1}.  Define,
$\epsilon_n^*=Mn^{-\beta/(2\beta +1)}\sqrt{\log n}$  for some $M>0$.
Then if $\delta_n(1)<1$,
\begin{eqnarray}\label{eq:main}
 P({\|q^{1/2}-p_{0}^{1/2}\|}_2 \geq \epsilon_n^* ) \leq 5\text{exp}\big(-C_2n{(\epsilon_n^* )}^2\big) + \text{exp}\big(-\frac{1}{4}n\alpha C_1{(\epsilon_n^* )}^2\big).  
\end{eqnarray}
 
\end{theorem}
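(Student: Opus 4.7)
The plan is to run the classical Wong--Shen sieve MLE argument, using Corollary \ref{thm:1} to control the behavior of the likelihood ratio on the ``bad'' set and the approximation error $\delta_n(1)$ from Lemma 2 to control the likelihood at a near-optimal approximant. Throughout, write $L_n(p) = \prod_{i=1}^n p(Y_i)/p_0(Y_i)$, and let $\hat p \in \mathscr F_n$ denote the sieve MLE (which is the object called $q$ in the statement).

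First I would fix an approximant. Since $\delta_n(1) = \inf_{f \in \mathscr F_n}\rho_1(p_0,f)$ is bounded by $C_5 n^{-2\beta/(2\beta+1)}$, pick $p_n^* \in \mathscr F_n$ with $\rho_1(p_0, p_n^*) \le 2\delta_n(1)$, so that $\int p_0^2/p_n^* = 1 + \rho_1(p_0,p_n^*) \le \exp(2\delta_n(1))$. By the defining property of the sieve MLE (with $\eta_n = 0$), we have the deterministic inequality
\begin{equation*}
L_n(\hat p) \;\ge\; L_n(p_n^*).
\end{equation*}
Consequently, on the event $\mathcal{A} = \{\|\hat p^{1/2} - p_0^{1/2}\|_2 \ge \epsilon_n^*\}$, at least one of the following must hold: either (i) $L_n(\hat p) \ge \exp(-C_1 n (\epsilon_n^*)^2)$, in which case the supremum of $L_n(p)$ over $\{p\in\mathscr F_n : \|p^{1/2}-p_0^{1/2}\|_2 \ge \epsilon_n^*\}$ exceeds this threshold; or (ii) $L_n(\hat p) < \exp(-C_1 n(\epsilon_n^*)^2)$, which forces $L_n(p_n^*) < \exp(-C_1 n(\epsilon_n^*)^2)$ by the MLE inequality. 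A union bound then yields
\begin{equation*}
P(\mathcal A) \;\le\; P\!\Bigl(\sup_{\substack{p\in\mathscr F_n\\ \|p^{1/2}-p_0^{1/2}\|_2\ge\epsilon_n^*}} L_n(p) \ge e^{-C_1 n(\epsilon_n^*)^2}\Bigr) \;+\; P\bigl(L_n(p_n^*) < e^{-C_1 n(\epsilon_n^*)^2}\bigr).
\end{equation*}

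The first probability is bounded directly by Corollary \ref{thm:1}, giving $4\exp(-C_2 n(\epsilon_n^*)^2)$ (the ``missing'' fifth exponential appears when one also absorbs a boundary term from the chaining version of Corollary \ref{thm:1}, which is the usual source of the $5$ in Wong--Shen-type statements). For the second probability, I would apply a Chernoff bound with parameter $\alpha \in (0,1]$: writing $L_n(p_n^*) = \exp(-\sum_i \log(p_0(Y_i)/p_n^*(Y_i)))$,
\begin{equation*}
P\bigl(L_n(p_n^*) < e^{-t}\bigr) \;\le\; e^{-\alpha t}\,E\!\left[\prod_{i=1}^n (p_0/p_n^*)^{\alpha}(Y_i)\right] \;=\; e^{-\alpha t}\bigl(1 + \alpha\,\rho_\alpha(p_0,p_n^*)\bigr)^n \;\le\; e^{-\alpha t + \alpha n \delta_n(\alpha) \cdot (\text{const})}.
\end{equation*}
Taking $t = C_1 n(\epsilon_n^*)^2$ and noting that the choice $\epsilon_n^* = M n^{-\beta/(2\beta+1)}\sqrt{\log n}$ gives $n(\epsilon_n^*)^2 = M^2 n^{1/(2\beta+1)}\log n$, while $n\delta_n(1) \lesssim n^{1/(2\beta+1)}$, one can choose $M$ large enough so that $\alpha n \delta_n \le \tfrac{3}{4}\alpha C_1 n(\epsilon_n^*)^2$. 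The remaining factor is then $\exp(-\tfrac14 \alpha C_1 n(\epsilon_n^*)^2)$, matching the second term of the stated bound.

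The main obstacle is bookkeeping of constants: one must verify that the entropy integral in Lemma \ref{hellinger} is satisfied at the scale $\epsilon_n^*$ (so that Corollary \ref{thm:1} applies with the specific rate, not just at some abstract $\epsilon$), and that $M$ can be picked to simultaneously (a) make $\delta_n(1) \ll (\epsilon_n^*)^2$ so the Chernoff bound kicks in, and (b) respect the regime $\delta_n(1) < 1$ assumed in the theorem. The extra $\sqrt{\log n}$ factor in $\epsilon_n^*$ (compared to the minimax $n^{-\beta/(2\beta+1)}$) is precisely what buys this slack; it also ensures that the first summand $5\exp(-C_2 n(\epsilon_n^*)^2)$ remains summable, which is the place where the logarithm is genuinely needed rather than cosmetic.
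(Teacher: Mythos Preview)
Your proposal is correct and follows the same Wong--Shen sieve MLE framework as the paper, but the emphasis differs. The paper's own proof does \emph{not} reconstruct the probabilistic decomposition you give (split into the ``large likelihood ratio on the bad set'' event versus the ``small likelihood at the approximant'' event, then Corollary~\ref{thm:1} plus a Chernoff bound). Instead, it (i) explicitly solves for the rate $\epsilon_n^*$ by plugging the ansatz $\epsilon_n=\sqrt{M}\,n^{-\gamma}(\log n)^t$, $k_n=n^{\Delta}$ into the entropy inequality of Lemma~\ref{hellinger} and matching exponents to get $\gamma=\beta/(2\beta+1)$, $t=1/2$; (ii) checks that $\delta_n(1)=C_5 n^{-2\beta/(2\beta+1)}<\tfrac14 C_1 M^2 n^{-2\beta/(2\beta+1)}\log n$ for large $n$; and then (iii) simply invokes Theorem~4 of \citet{wong1995probability} to obtain \eqref{eq:main}. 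In other words, the paper verifies the \emph{hypotheses} of Wong--Shen's Theorem~4 and cites it, whereas you essentially re-prove that theorem from Corollary~\ref{thm:1}. Your ``main obstacle'' paragraph correctly flags the one piece the paper treats in detail and you leave implicit: the calculation showing that $\epsilon_n^*$ of the stated form actually satisfies the entropy integral condition at scale $\epsilon_n^*$.
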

\noindent The proofs of the results are deferred to Section 6 (Supplementary Materials). Note that the convergence rate is independent of the initial step $f_p$ (upto constant terms) because the estimation problem is shifted to $\Gamma$ given a fixed choice of $f_p$.

\setcounter{equation}{0} 

\setcounter{equation}{0} 
\section{Simulation Studies}
Next, we present results from experiments on univariate unconditional density estimation procedure 
involving two simulated datasets. 
The computations described here
are performed on an Intel(R) Core(TM) i7-3610QM CPU processor laptop, and the 
computational times are reported for each experiment. 
We compare the proposed solution with two standard techniques: (1) 
 kernel density estimates with bandwidth selected by unbiased cross validation method, henceforth referred to as {\it kernel(ucv)}, 
 (2) a  standard Bayesian technique using the  function {\it DPdensity} in the R package \texttt{DPPackage}. 
 We focus on the average performance of the different techniques over $100$ independent 
samples from the true density. We use {\it ksdensity} as 
the initial estimate $f_p$  for our approach. We consider sample sizes of $25, 100$ and $1000$, to 
study the effect of $n$ on estimation performance and computational cost. 
The performance is evaluated using multiple norms:  $\ltwo$, $\lone$ norm and $\linf$ norm, averaged over
the $100$ samples. 

We borrow the first example from \citet{tokdar2007towards} and \citet{lenk1991towards},
where $p_0 \propto 0.75 \text{exp}(\text{rate}=3) + 0.25 {\cal N}(0.75,2^2)$, 
a mixture of exponential and normal density truncated to the interval $[0,1]$: 
Table \ref{tabletokdar} summarizes estimation performance and computation cost for these methods at different sample sizes. The values of mean and standard
deviation have been scaled by $100$ for convenience. It is observed
that when $n=25$, {\it kernel(ucv)} method outperforms the other two methods. However, for higher sample sizes, the 
warping-based method has a better overall performance. The computational cost of the 
proposed method, while higher than {\it kernel(ucv)}, is much less 
than the {\it DPdensity} for higher sample sizes. In this example, we also studied performance using the Fourier basis and 
the results were very similar.  

\begin{table}[t!] 
\caption{\it A comparison of the performances for mixture of exponential and normal example.}
\label{tabletokdar}\par
\vskip .2cm
\centerline{\tabcolsep=3truept\begin{tabular}{|rcrrrrrrrrr|} \hline 
\multicolumn{2}{|c}{Method:}
         & \multicolumn{3}{c}{DPDensity}
         & \multicolumn{3}{c}{Kernel(ucv)}
         & \multicolumn{3}{c|}{Warped Estimate} \\ \hline
\multicolumn{1}{|c}{$n$} &  \multicolumn{1}{c}{Norm}
          & \multicolumn{1}{c}{Mean} & \multicolumn{1}{c}{std.dev.} & \multicolumn{1}{c}{Time}
          & \multicolumn{1}{c}{Mean} & \multicolumn{1}{c}{std.dev} & \multicolumn{1}{c}{Time}
          & \multicolumn{1}{c}{Mean} & \multicolumn{1}{c}{std.dev} & \multicolumn{1}{c|}{Time} \\[3pt] \hline
\multirow{ 3}{*}{25} & $\lone$ &37.26 & 8.63 &  & 33.51 & 11.97 &  & 39.53 & 9.8 &   \\
& $\ltwo$ & 5.05 & 0.9 & 4 sec & 4.5 & 1.44 & $<1$ sec & 4.96 & 1.27 & 5 sec  \\
& $\linf$ & 1.64 & 0.21 &  & 1.44 & 0.47 &  & 1.34 & 0.53 &
 \\ \hline
\multirow{ 3}{*}{100} & $\lone$ & 22.87 & 5.32 &  & 21.9 & 5.54 &  & 22.46 & 4.95 &    \\
& $\ltwo$ & 3.47 & 0.58 & 18 sec & 3.14 & 0.57 & $<1$ sec & 2.93 & 0.61 & 5 sec  \\
& $\linf$ & 1.49 & 0.2 &  & 1.23 & 0.24 &  & 0.88 & 0.34 &
   \\ \hline
\multirow{ 3}{*}{1000} & $\lone$ & 10.79 & 2.05 &  & 11.57 & 2.14 &  & 10.05 & 1.36 &   \\
& $\ltwo$ & 1.83 & 0.24 & 225 sec & 1.67 & 0.23 & $<1$ sec & 1.31 & 0.16 & 5 sec  \\
& $\linf$ & 1.18 & 0.2 &  & 0.88 & 0.22 &  & 0.5 & 0.17 &  
\\ \hline
\end{tabular}}
\end{table}

For the second example we take Example 10 from \citet{marron1992exact}, which uses 
a claw density: $p_0  = \frac{1}{2}{\cal N}(0,1) + \sum_{l=0}^{4} \frac{1}{10}{\cal N}(\frac{l}{2} -1, {(0.1)}^2)$. 

Unlike the previous example, instead of fixing $J$, the number of tangent basis elements, 
we employ Algorithm 1 (please refer to Section 7 of the Supplementary Materials)  to find the optimal $J$ based on the AIC, with a maximum allowed value of $40$ basis elements. 
Consequently, as can be seen in Table \ref{tableclaw},  the computation cost goes up.
Additionally, we note that the cost is  highest for $n=25$ and actually decreases as $n$ increases.  
This is because for small $n$ there is less information and it take more time for the objective function to converge.  

\begin{table}[t!] 
\caption{\it Comparison for claw density example.}
\label{tableclaw}\par
\vskip .2cm
\centerline{\tabcolsep=3truept\begin{tabular}{|ccrrrrrrrrr|} \hline 
\multicolumn{2}{|c}{Method:}
         & \multicolumn{3}{c}{DPDensity}
         & \multicolumn{3}{c}{Kernel(ucv)}
         & \multicolumn{3}{c|}{Warped Estimate} \\ \hline
\multicolumn{1}{|c}{$n$} &  \multicolumn{1}{c}{Norm}
          & \multicolumn{1}{c}{Mean} & \multicolumn{1}{c}{std.dev.} & \multicolumn{1}{c}{Time}
          & \multicolumn{1}{c}{Mean} & \multicolumn{1}{c}{std.dev} & \multicolumn{1}{c}{Time}
          & \multicolumn{1}{c}{Mean} & \multicolumn{1}{c}{std.dev} & \multicolumn{1}{c|}{Time} \\[3pt] \hline
\multirow{ 3}{*}{25} & $\lone$ &39.15 & 6.29 &  &17.06 & 2.33 &  & 18.28 & 3.3 &   \\
& $\ltwo$ & 5.46 & 0.48 & 4 sec & 2.09 & 0.3 & 1 sec & 2.41 & 0.43 & 105 sec   \\
& $\linf$ & 1.2 & 0.05 &  & 0.5 & 0.14 &  & 0.64 & 0.17 &  
 \\ \hline
\multirow{ 3}{*}{100} & $\lone$ & 28.39 & 4.55 &  & 8.54 & 2.38 &  & 9.06 & 2.6 &   \\
& $\ltwo$ & 4.31 & 0.46 & 26 sec & 1.18 & 0.28 & 1 sec & 1.3 & 0.35 & 85 sec  \\
& $\linf$ & 1.08 & 0.09 &  & 0.34 & 0.08 &  & 0.42 & 0.13 &  
 \\ \hline
\multirow{ 3}{*}{1000} & $\lone$ & 19.28 & 1.63 &  & 2.4 & 0.38 &  & 2.46 & 0.43 &   \\
& $\ltwo$ & 3.16 & 0.15 & 331 sec & 0.38 & 0.06 & 1 sec & 0.4 & 0.08 & 71 sec  \\
& $\linf$ & 0.83 & 0.04 &  & 0.14 & 0.03 &  & 0.15 & 0.04 &  
 \\ \hline
\end{tabular}}
\end{table}

Table \ref{tableclaw} shows that at  $n=1000$, the performances of 
all three methods are similar, especially between {\it kernel(ucv)} and warped density estimate. 
In fact, the warped density estimate and {\it kernel(ucv)} perform similarly 
even at low sample sizes, while {\it DPdensity} performs poorly. 
These results were obtained using the Fourier basis but the results for Meyer basis were similar.


\setcounter{equation}{0} 
\section{Extension to Conditional Density Estimation}
The idea of using diffeomorphisms to warp an initial density estimate, while maximizing likelihood, extends naturally to conditional density estimation.  
Consider the following setup: Let $X$ be a fixed $d$-dimensional random variable with a positive density on its support. 
Let $Y \sim  p_0(m(X),\sigma_X^2)$, where $p_0$ is the unknown conditional density that changes smoothly with $X$;
$m(X)$ is the unknown mean function, assumed to be differentiable; and, $\sigma_X^2$ is the unknown variance, 
which may or may not depend on $X$. $Y$ is assumed to have a univariate, continuous distribution with  support on 
unknown interval $[A,B]$. 
 We observe the pairs 
$(Y_i,X_i),i=1,\dots,n$, and are interested in recovering the conditional density $p_0(m(X),\sigma_{X}^2)$.

In order to initialize estimation,  we assume a nonparametric mean regression model of the form 
$y_i =  m(x_i) + \epsilon_i$ ,  $\epsilon_i \sim f_p(0,\sigma^2)$, where $m(\cdot)$ is estimated using  standard local linear regression, 
$f_p$ is an initial estimate for the conditional density of the response variable, 
and ${\sigma^2}$ is estimated using the sample standard deviation of the residuals $Y_i -\hat{m}(X_i)$. We have used truncated normal density as  $f_p$ in the experiments presented later but other choices are equally valid. As was the case in 
unconditional {\it pdf} estimation, it is not required that the initial 
estimate has mean function close to the true mean function, or assume any particular form. 
The only requirement is 
that the initial conditional density should be continuous and bounded away from zero, and the density should vary smoothly with $X$ in the sense that if $X_1$ and $X_2$ are close to each other, then $f_p(Y|X_1)$ should be close to $f_p(Y|X_2)$ in the $\ltwo$ or some other metric. 
Let $F_{p,x_0}$ be the corresponding initial estimate of the conditional distribution function of $Y$, 
given $X=x_0$ for some given value of the predictor $x_0$. 
Then, the warped density estimate, for a warping function $\gamma$ and location $x_{0}$, is 
$
f_{w,x_0} (y|X=x_0)= f_{p}(\gamma (y),\hat{m}(x_0),\hat{\sigma}^2)\dot{\gamma}(y)$.
If $F_{t,x_0}$ is the true conditional distribution function of $Y$,  given $X=x_0$, then the true $\gamma$ 
at location $x_0$ is $\gamma_{x_0} = F_{p,x_0}^{-1} \circ F_{t,x_0}$.
Setting $f_{p,x_0} \equiv f_p(\hat{m}(X),\hat{\sigma}^2)$, 
we estimate the optimal $\gamma$ by a weighted maximum likelihood estimation:
$
\hat{\gamma}_{x_0} = \argmax_{\gamma \in \Gamma} \left( \sum_{i=1}^n \log\bigg[(f_{p,x_0}(\gamma (y_i)|x_i)\dot{\gamma})W_{x_0,i}\bigg] \right)\ ,
$
where $W_{x_0,i}$ is the localized weight associated with the $i$th observation, calculated as:
$$
W_{x_0,i}=\frac{{\cal N}({\|X_i-x_0\|}_2/h(x_0);0,1)}{\sum_{j=1}^{n}{\cal N}({\|X_j-x_0\|}_2/h(x_0);0,1)}
$$
where ${\cal N}(\cdot;0,1)$ is the standard normal {\it pdf} and $h(x_0)$ is the parameter that controls the relative weights associated with the observations. However, weights defined in this way results in higher bias because information is being borrowed from all observations. As discussed in an example in \citet{bashtannyk2001bandwidth}, we allow only a specified fraction of the observations $X_i$ to have a positive weight. However, using too small a fraction will result in unstable estimates and poor practical performance because the effective sample size will be too small. Hence we advocate using the nearest $50\%$ of the observations (nearest to the target location) for borrowing information and then calculating the weights for this smaller sample as defined before.

The parameter $h(x_0)$  is akin to the bandwidth parameter associated with traditional kernel methods for density estimation. 
A very large value of $h(x_0)$ distributes approximately equal weight to all the observations, whereas a
very small value considers only the observations in a small neighborhood around $x_0$. 
Since $h(x_0)$ is scalar, the tremendous computational cost associated with obtaining cross-validated bandwidths in each predictor dimension, when the predictor dimension is high, is avoided. 
When the predictor is one-dimensional, the parameter $h(x_0)$ is chosen according 
to the location $x_0$ using a two-step procedure as follows:
\begin{enumerate}
\item
Compute a standard kernel density estimate $\hat{K}$ of the predictor space using a fixed bandwidth chosen according to any standard criterion. Let $h$ be the fixed bandwidth used.

\item
Then, set the bandwidth parameter $h( x_0)$ at location $x_0$ to be $h(x_0)=h/\sqrt{\hat{K}(x_0)}$.
 
\end{enumerate}
The intuition is that $h$ controls the overall smoothing of the predictor space based on the sample points, and the $\sqrt{\hat{K}(x_0)}$ stretches or shrinks the bandwidth at the particular location. 
 The choice of the adaptive bandwidth parameter is motivated from the variable bandwidth kernel density estimators discussed in \citet{terrell1992variable}, \citet{van2003adaptive} and \citet{abramson1982bandwidth}, among others. 
In case of $d$ independent predictors, $h({\bf x_0})$ at ${\mathbf x_0}$ is chosen as follows:
\begin{enumerate}
\item
Compute the kernel density estimate $\hat{K}_i,i \in 1,\cdots,d$ for the $d$ predictors with associated bandwidths $h_1,h_2,\cdots,h_d$. Then $h$ is chosen as the harmonic mean of the $h_i$'s.
\item
Once $h$ is obtained, the bandwidth parameter $h({\mathbf x_0})$ at ${\bf x_0}$ is given by:\\
\begin{equation}
h({\mathbf x_0})=h/\bigg(\prod_{i=1}^{d}\sqrt{\hat{K}_i(x_{0i})}\bigg)\label{multi}
\end{equation}
 where $x_{0i}$ is the $i$th coordinate of ${\mathbf x_0}$.
\end{enumerate}
This choice of using the harmonic mean is 
based on the dependence of the minimax rates of convergence of estimators to the harmonic mean of the smoothness of the density along the different dimensions, as discussed in \citet{lepski2015adaptive}.

\subsection{Simulation Studies}
We present two examples to illustrate the proposed method and compare it with a standard R package \texttt{NP} (with kd-tree package implementation to reduce computation time). 
In these experiments we have used a gaussian family for $f_p$, 
the initial parametric conditional density estimate. To estimate the mean function, we have used a local-linear 
regression function with gaussian kernel weights and bandwidth obtained from \texttt{kernel(bcv)} available in \texttt{R} package \texttt{kedd}. Bandwidth from other estimators like unbiased cross validation and even the naive {\it ksdensity} function in \texttt{\texttt{MATLAB}} produce practically identical results. We use six basis elements for the tangent space representation throughout.

For comparison, we used $100$ samples each of size $n=100$ and $n=1000$ to obtain a mean integrated squared-error loss function estimate, a mean absolute error estimate and a mean $\linf$ loss function estimate from the densities evaluated over a grid of $100$ points at $10$ equidistant locations over the support of each of the predictors. As a first example, we consider a situation where the true conditional density is 
a Laplace distribution, i.e.
$f(y_i|X=x_i)=\text{DExp}(y_i; \text{mean=}{(2x_i -1)}, \text{var=}1)$ and $X_i \sim \mathcal{N}(0,1)$. As the second example we take a bivariate predictor scenario where $f(y_i|X=(x_{1i},x_{2i}))= (1-e^{-x_{2i}}){\cal N}(y_i; (x_{1i} +2), {(0.5)}^2)  + (e^{-x_{2i}}) \text{DExp}(y_i; {(x_{1i} -1)}, 1)$ and the predictors $X_1 \sim 0.95\mathcal{N}(0,{(0.4)}^2) + 0.05\mathcal{N}(0,{(1.4)}^2)$ and $X_2 \sim \mathbb{U}(0,1)$.

The results are summarized in Table \ref{tablecde1}. From the results it is clear that when the sample size is low the performance of the warped estimate is better and more stable. When the sample size is high the performance of the two methods are more comparable though the warped estimation method still provides more stable performances. However, the computation cost of the \texttt{NP} package is very high even with the kd-tree implementation, whereas the warped estimation is computationally very efficient.  \\

\begin{table}[t!] 
\caption{\it A comparison of the performances \texttt{NP} package and Warped estimate for simulated examples.}
\label{tablecde1}\par
\vskip .2cm
\centerline{\tabcolsep=3truept\begin{tabular}{|r|r|c|rrr|rrr|} \hline 
\multicolumn{3}{|c}{Method:}
             & \multicolumn{3}{c}{NP package}
         & \multicolumn{3}{c|}{Warped Estimate} \\ \hline
\multicolumn{1}{|c}{Example} & \multicolumn{1}{c}{$n$} &  \multicolumn{1}{c}{Norm}
           & \multicolumn{1}{c}{Mean} & \multicolumn{1}{c}{std.dev} & \multicolumn{1}{c}{Time}
          & \multicolumn{1}{c}{Mean} & \multicolumn{1}{c}{std.dev} & \multicolumn{1}{c|}{Time} \\[1pt] \hline
\multirow{6}{*}{Example 1} & \multirow{ 3}{*}{100} & $\lone$ &  4.11 & 0.51 &  & 3.28 & 0.44 &    \\
& &  $ISE$ & 0.59 & 0.12 & $1$ sec & 0.41 & 0.11 & $1$ sec  \\
& &  $\linf$ &  0.40 & 0.07 &  & 0.88 & 0.34 &
   \\ \cline{2-9}
 & \multirow{ 3}{*}{1000} & $\lone$ &  2.50 & 0.24 &  & 2.46 & 0.11 &    \\
& &  $ISE$ &  0.26 & 0.04 & $51$ sec & 0.25 & 0.03 & 3 sec  \\
& &  $\linf$ & 0.39 & 0.06 &  & 0.36 & 0.04 &
\\ \hline
\multirow{6}{*}{Example 2} & \multirow{ 3}{*}{100} & $\lone$ &  60.49 & 6.67 &  & 58.55 & 5.28 &    \\
& &  $ISE$ &  11.43 & 4.01 & $2$ sec & 10.38 & 1.82 & $2$ sec  \\
& &  $\linf$ &  2.47 & 0.43 &  & 2.41 & 0.35 &
   \\ \cline{2-9}
 & \multirow{ 3}{*}{1000} & $\lone$ &  42.10 & 4.32 &  & 53.53 & 1.86 &    \\
& &  $ISE$ & 5.88 & 1.41 & $198$ sec & 8.96 & 0.57 & $7$ sec  \\
& &  $\linf$ &  2.38 & 0.29 &  & 2.24 & 0.25 &
\\ \hline
\end{tabular}}
\end{table}

\subsection{Application to Epidemiology}

\citet{longnecker2001association} studied the association of DDT metabolite DDE exposure and preterm birth in a study based on the US Collaborative Perinatal Project (CPP). DDT is very effective against malaria inflicting mosquitoes and hence is frequently used in malaria-endemic areas in spite of evidence that suggests associated health risks. Both \citet{longnecker2001association} and \citet{dunson2008kernel} concluded that higher levels of DDE exposure is associated with higher risks of preterm birth. The response variable in question is the gestational age at delivery (GAD), and deliveries occurring prior to 37 weeks of gestation is considered as preterm. \citet{longnecker2001association} also recorded the serum triglycerine level, among several other factors, and included it in their model since serum DDE level can be affected by concentration of serum lipids.

We study the Longnecker data to investigate the effect of varying levels of DDE on the distribution of GAD, focusing on the left tail of distribution to assess the effect on preterm births. In our study, following \citet{dunson2008kernel}, we include only the 2313 subjects for whom the gestation age at delivery is less than 45 weeks, attributing higher values to measurement errors. We study the conditional density of GAD given different doses of DDE in the serum. We also study the effect of different levels of triglyceride on GAD. However, since DDE is a possible confounding factor, we conduct a bivariate analysis, including both DDE dose and triglyceride level as the covariates and study the effect on GAD at varying levels of one covariate, keeping the other fixed. We also investigate whether different levels of one covariate affect the distribution of the other. 

Based on our findings, the very erratic behavior at locations where the DDE dose or triglyceride levels are 99th percentile is seen with some skepticism because of the sparsity of the data in that region. We notice an increasingly prominent peak near the left tail of GAD distribution with increasing dose of DDE, which agrees with the results of \citet{longnecker2001association} and \citet{dunson2008kernel}, shown in the left panel of Figure \ref{fig:gestvsddeandgly}. The right panel of Figure \ref{fig:gestvsddeandgly} suggests a tendency of higher risks of preterm birth at higher doses of triglycerides as well, though the difference was less pronounced.

To investigate whether the results corresponding to triglycerides were confounded by the DDE doses, we first study the effect of triglyceride levels on DDE distribution and vice versa.
Figure \ref{fig:covariate} shows that the distributions of the covariates are completely identical for varying levels of the other. The only exception is at 99th percentile of triglyceride for which the distribution of DDE doses seem to be shifted to the right. 
For fixed levels of triglyceride, increasing DDE doses shows an increasing left peak except where both DDE and triglyceride levels are very high, shown in Figure \ref{fig:glyfixed}. For fixed doses of DDE the distribution of GAD at different levels of triglyceride do not follow any increasing trend and are almost indistinguishable from each other for all the different doses of DDE, as seen in Figure \ref{fig:ddefixed}. This suggests that the increased risk of preterm birth can be attributed primarily to DDE doses, and there is no significant effect of different triglyceride levels on the gestation age.  The apparent increasing risk of preterm birth for increasing level of triglycerides seen in the right panel of Figure \ref{fig:gestvsddeandgly} is mainly caused by DDE doses acting as a confounding factor.

\begin{figure}[t!]
\begin{center}
\begin{tabular}{cc}
\includegraphics[width=3in,keepaspectratio]{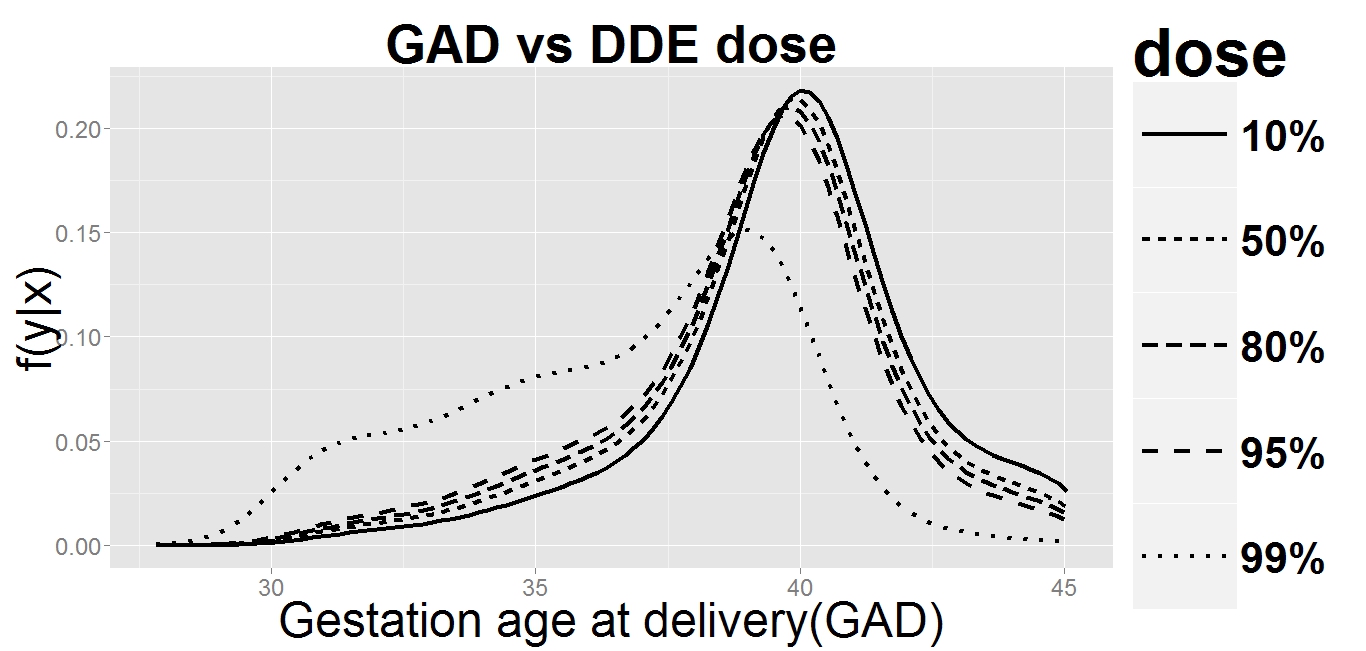} &
\includegraphics[width=3in,keepaspectratio]{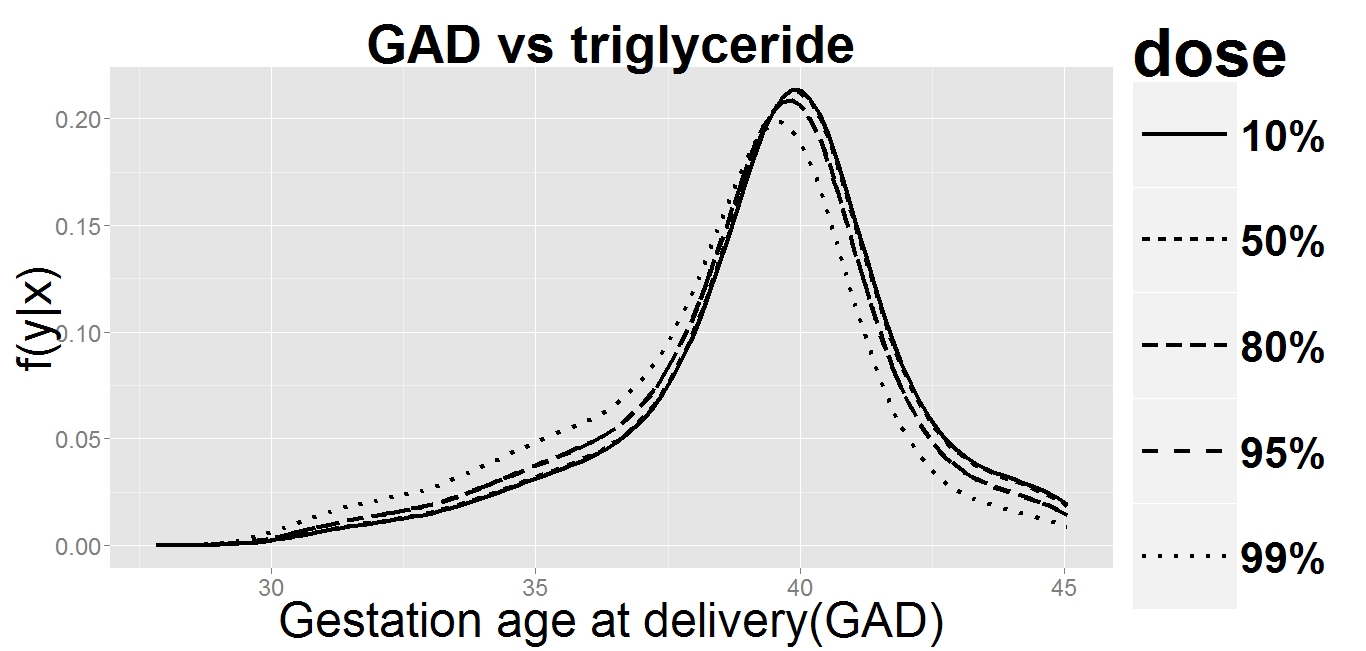} 
\end{tabular}
\caption{\it Distribution of gestation age at delivery  for varying levels of DDE and triglyceride}
\label{fig:gestvsddeandgly}
\end{center}
\end{figure}

\begin{figure}[htbp]
\begin{center}
\begin{tabular}{cc}
\includegraphics[width=3in,keepaspectratio]{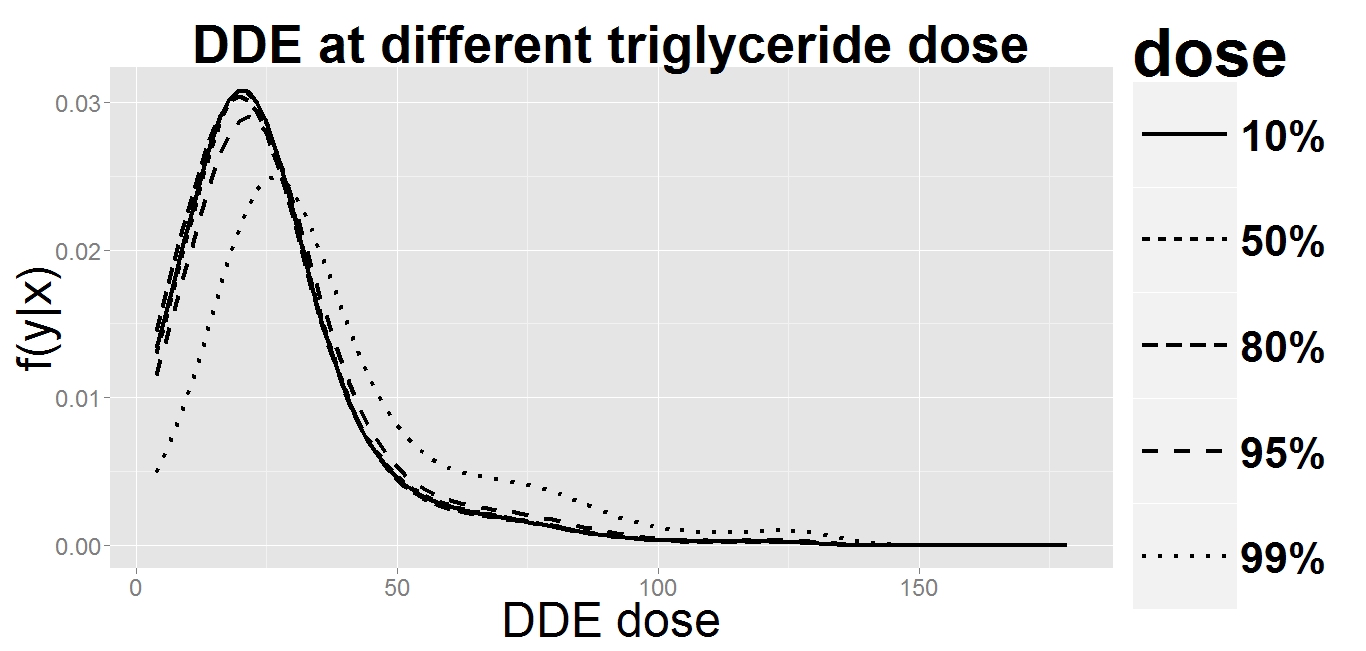} &
\includegraphics[width=3in,keepaspectratio]{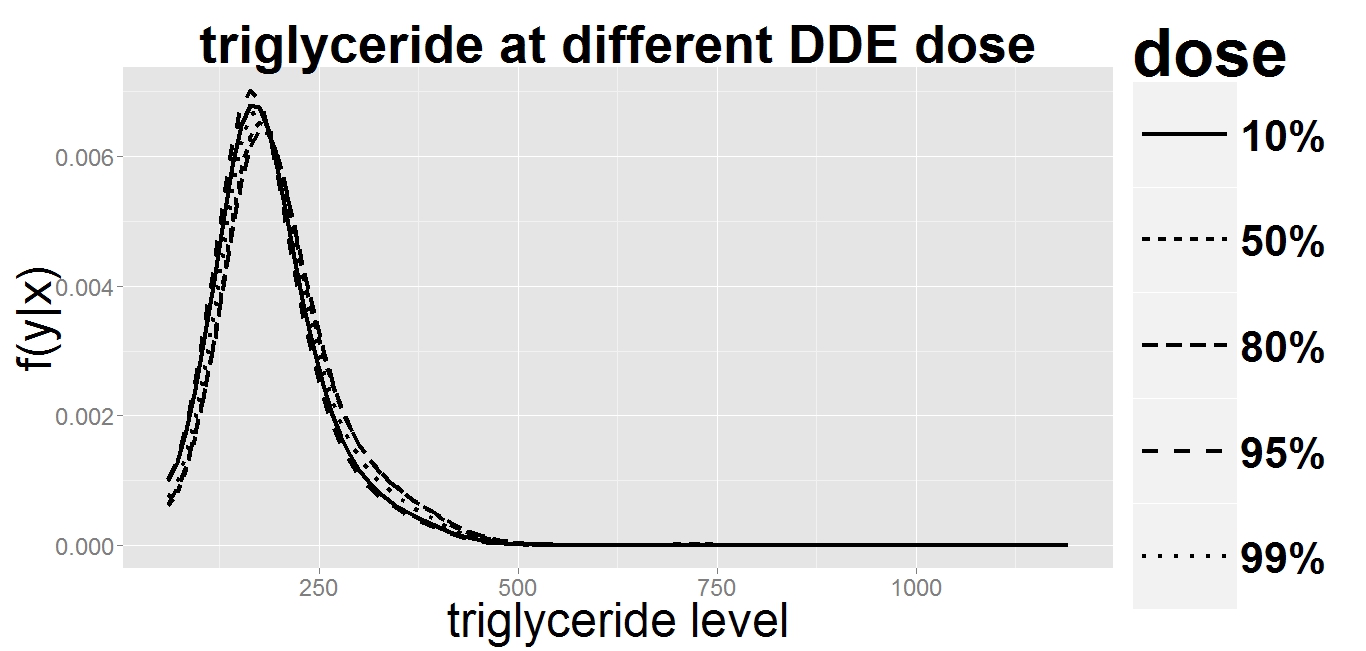} 
\end{tabular}
\caption{\it Distribution of DDE and triglyceride at different levels of the other}
\label{fig:covariate}
\end{center}
\end{figure}

\begin{figure}[htbp]
\begin{center}
\begin{tabular}{ccc}
\includegraphics[width=2in,keepaspectratio]{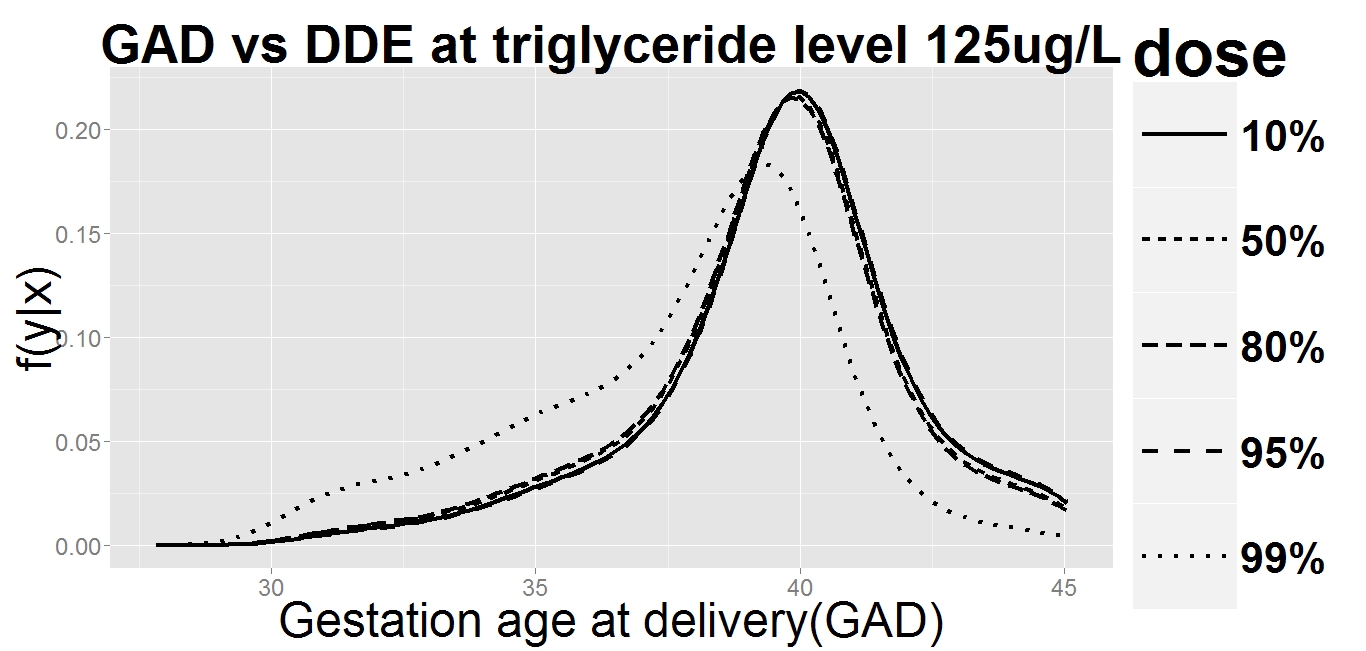} &
\includegraphics[width=2in,keepaspectratio]{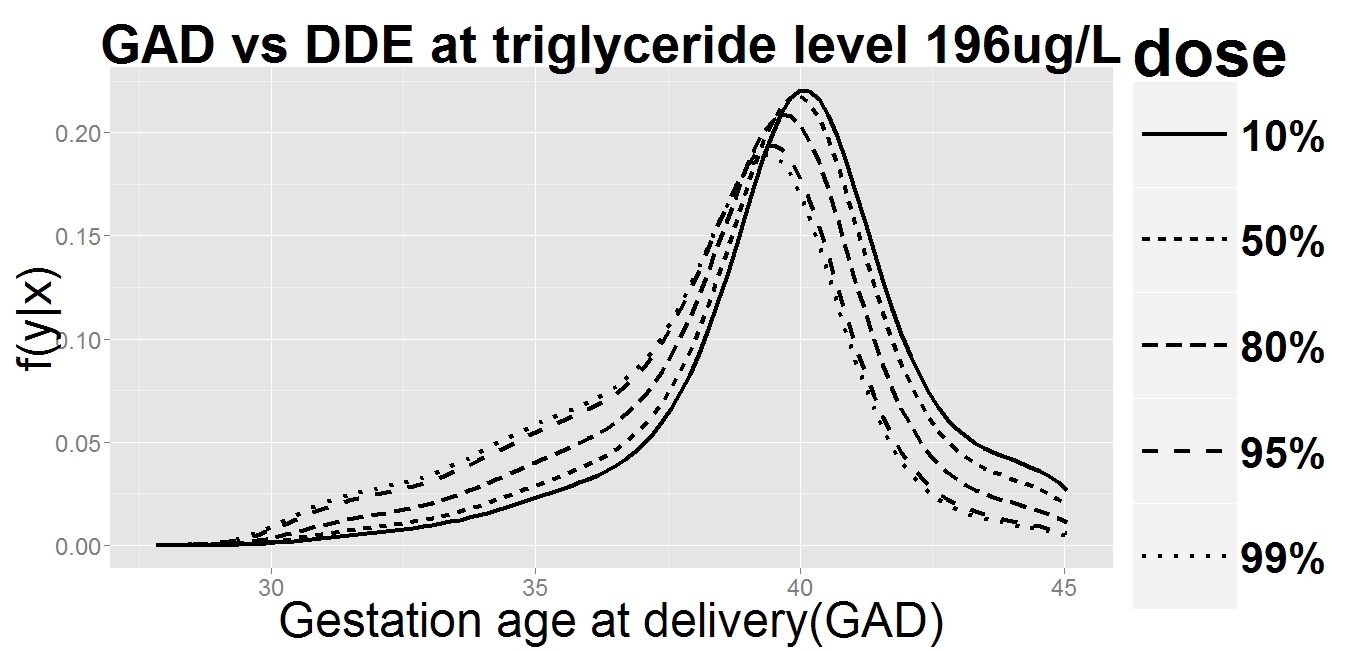} &
\includegraphics[width=2in,keepaspectratio]{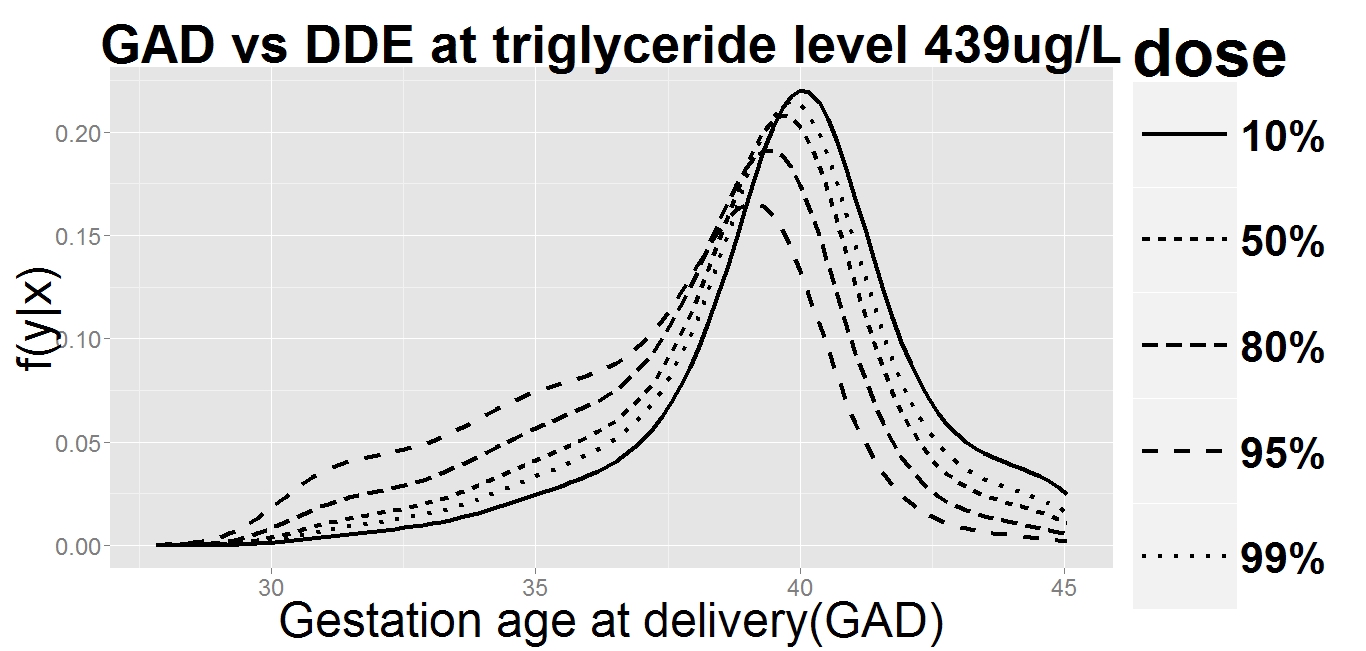} 
\end{tabular}
\caption{\it Distribution of gestation at varying levels of DDE for fixed values of triglyceride}
\label{fig:glyfixed}
\end{center}
\end{figure}

\begin{figure}[htbp]
\begin{center}
\begin{tabular}{ccc}
\includegraphics[width=2in,keepaspectratio]{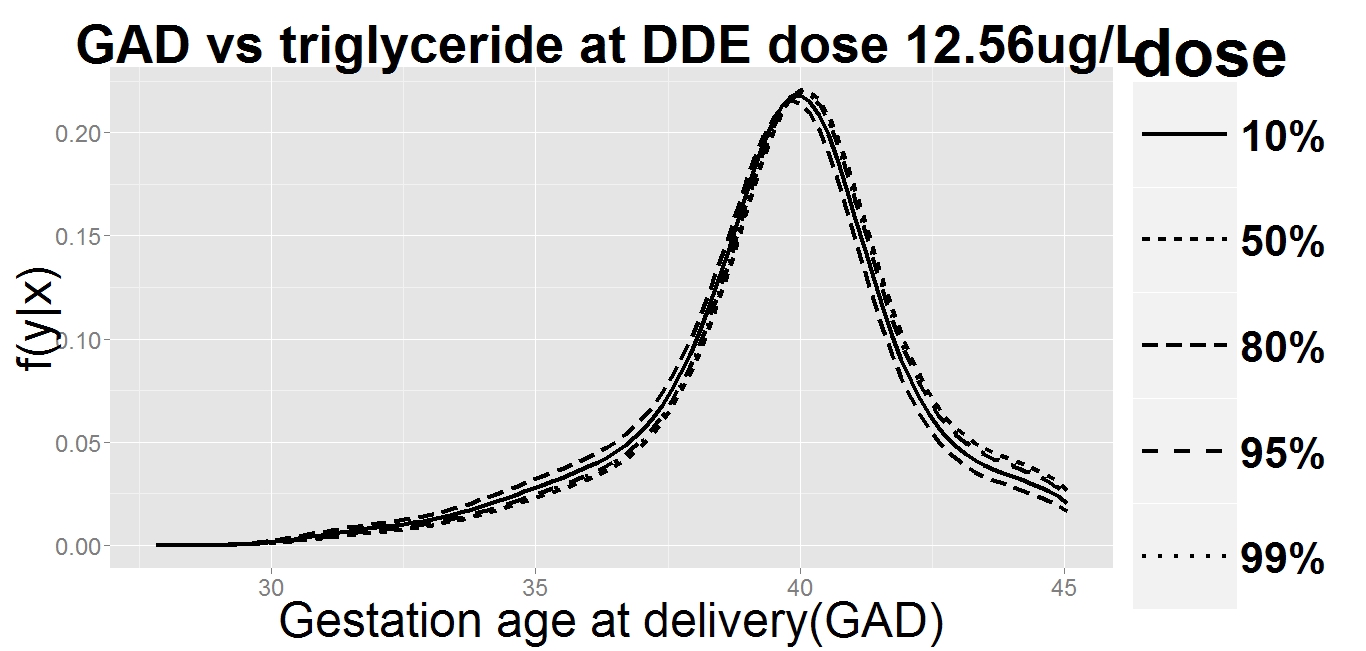} &
\includegraphics[width=2in,keepaspectratio]{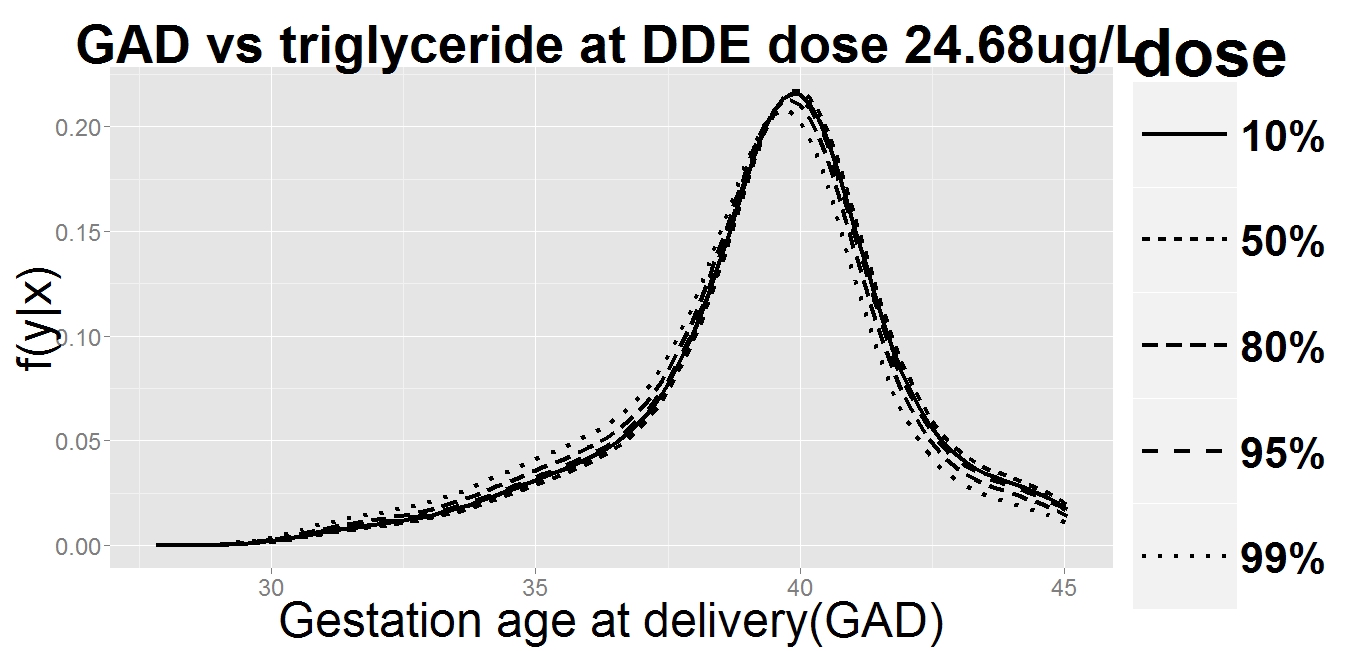} &
\includegraphics[width=2in,keepaspectratio]{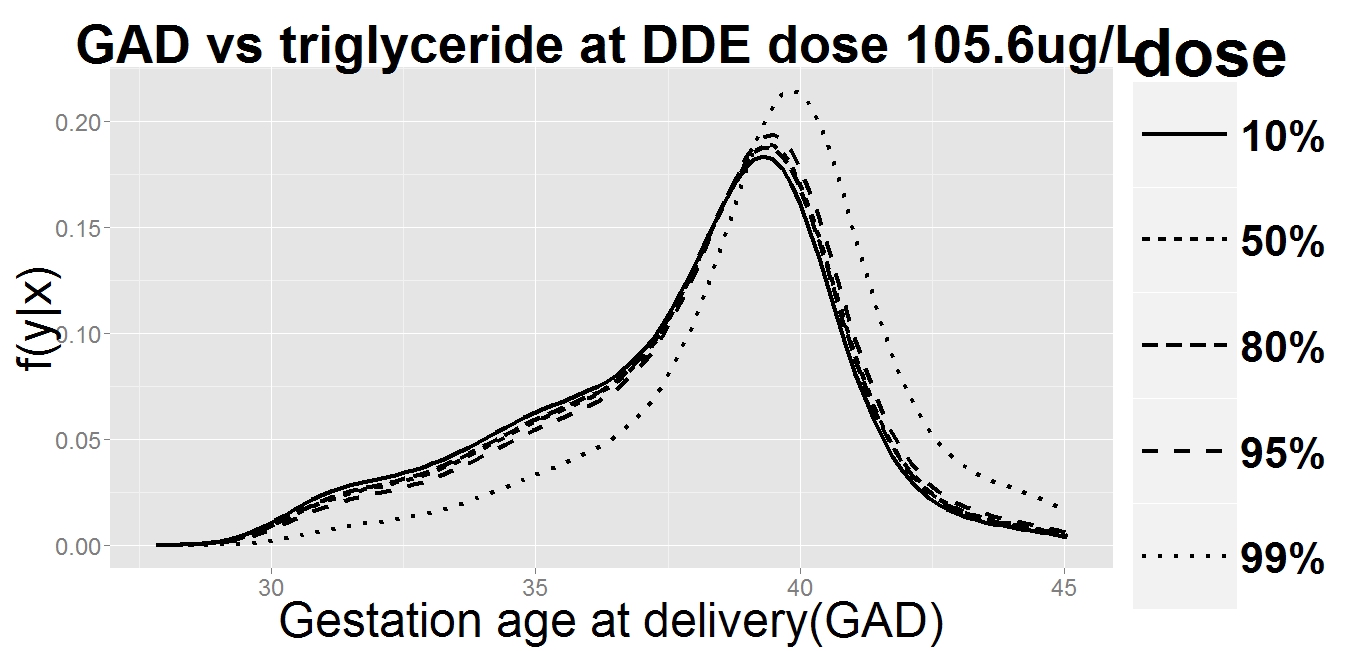} 
\end{tabular}
\caption{\it Distribution of gestation at varying levels of triglyceride for fixed values of DDE}
\label{fig:ddefixed}
\end{center}
\end{figure}

\setcounter{equation}{0} 

\newpage
\begin{centering}
{\large\bf SUPPLEMENTARY MATERIALS}
\end{centering}

\section{Theoretical Results}\label{familydisc}
Let $\mathscr{F}$ and $\mathscr{F}_n$ be as defined in Section $2$ of the manuscript.
 To control the approximation error of $\mathscr{F}_n$, \citet{wong1995probability} introduces a family of discrepancies. They define $\delta_n (p_0,\mathscr{F}_n)=\text{inf}_{f \in \mathscr{F}_n} \rho (p_0,f)$, called the $\rho$-approximation error at $p_0$. Here $p_0$ is the true density which is assumed to belong to H\"{o}lder space of order $\beta>0$ so that if $k_n=l_1n^{1/(2\beta +1)}$ then $\inf_{f \in \mathcal{F}_n} \| p_0 - f\|_\infty \leq l_2 n^{-\beta/(2\beta+1)}$ for some arbitrary constants $l_1$ and $l_2$ . The control of the approximation error of $\mathscr{F}_n$ at $p_0$ is necessary for obtaining results on the convergence rate for sieve MLEs.
We follow \citet{wong1995probability} to introduce a family of indexes of discrepency in order to formulate the condition on the approximation error of $\mathscr{F}_n$.
Let
\[
g_{\alpha} (x) = \left\{\begin{array}{lr}(1/\alpha)[x^{\alpha} -1], -1<\alpha<0 \text{ or }0<\alpha \leq 1\\
\log{x}, \text{  if }\alpha=0+
\end{array}
\right.
\]
Set $x=p/f$ and define $\rho_{\alpha} (p,f)=E_pg_{\alpha}(X)=\int pg_{\alpha}(p/f).$
We define $\delta_n (\alpha) =\inf _{f \in \mathscr{F}_n} \rho_{\alpha} (p_0,f)$. 
We call a finite set $\{(f_{j}^{L},f_{j}^{U}),j=1,\dots,N\}$ a Hellinger $u$-bracketing of $\mathscr{F}_n$ if 
${\|{f_{j}^{L}}^{1/2}-{f_{j}^{U}}^{1/2}\|}_{2} \leq u$ for $j=1,\dots,N$, and for any $p \in \mathscr{F}_n$, there is a $j$ such that $f_{j}^{L} \leq p \leq f_{j}^{U}$. 
Let $H(u,\mathscr{F}_n)$ be the Hellinger metric entropy of $\mathscr{F}_n$, defined as the cardinality of the $u$-bracketing of $\mathscr{F}_n$ of the smallest size. 
Let $f_p$ be the initial estimate on which we use the group action of the space of diffeomorphisms to arrive at the final estimate. 
Throughout, $c_1$ and $c_2 $ have been used to represent coefficient vectors  in the tangent space of the Hilbert sphere for some fixed basis set corresponding to warping function that acts on $f_p$. When $c_1$ denotes the coefficient vector corresponding to the true density denoted by $p_0 \in \mathscr{F}$ and $c_2$ corresponds to the estimate $f \in \mathscr{F}_n$, ${c_1}^{>k_n}$ represents the $(k_n +1)$th onwards coordinates of $c_1$.  $l_1, l_2, l_3$ and $l_4$ are used to indicate specific constants. Also, $M_1, M_2, M_3,\dots,$ have been used to represent generic constants whose value can change from step to step but is independent of other terms in the expressions.

\subsection{{\it pdf} space versus the coefficient space}  \label{equivproof}
Let $f_1$ and $f_2$ be two {\it pdfs} on $\mathscr{F}_n$ with corresponding cumulative distribution functions $F_1$ and $F_2$. Let $f_p$ be the initial density estimate on $\mathscr{F}_p$ such that $f_p$ is strictly positive and Lipschitz continuous with cumulative distribution function $F_p$. Let $\gamma_1={F_p}^{-1} \circ F_1$ and $\gamma_2={F_p}^{-1} \circ F_2$.
Let $c_1= (c_{11}, \ldots, c_{1k_n})^{\T}$ and $c_2  = (c_{21}, \ldots, c_{2k_n})^{\T}$  be the coefficients associated with the two elements of $T_{{\bf 1}}(\s_{\infty})$ corresponding to the tangent space representation of $\gamma_1$ and $\gamma_2$. Here  $\mathscr{F}_n$ and $k_n$ are as introduced in section $2$ of the manuscript. Then  the following Lemma bounds the norm difference of $f_1$ and 
$f_2$  with the norm difference in the coefficients.
\begin{prop}
$|f_1 -f_2| \leq M_0 \norm{c_1 - c_2}_{1}$
where $M_0>0$ is a constant.
\end{prop}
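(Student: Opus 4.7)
The plan is to pass from coefficient differences up through the chain of maps in \eqref{eq:representation} (basis $\to$ tangent vector $\to$ SRSF $\to$ warping) and finally into the warped density, picking up Lipschitz constants at each stage. First, I would decompose via the triangle inequality:
\[
|f_1(t) - f_2(t)| \;\le\; \dot\gamma_1(t)\,|f_p(\gamma_1(t)) - f_p(\gamma_2(t))| \;+\; \|f_p\|_\infty\,|\dot\gamma_1(t) - \dot\gamma_2(t)|.
\]
Since $f_p$ is Lipschitz and strictly positive, the first term is controlled by $L\,\dot\gamma_1(t)\,|\gamma_1(t)-\gamma_2(t)|$, and since $\dot\gamma_i = q_i^2$ with $q_i$ bounded pointwise on $[0,1]$, the whole estimate reduces to bounding $\|\gamma_1-\gamma_2\|_\infty$ and $\|\dot\gamma_1-\dot\gamma_2\|_\infty$ by $\|c_1-c_2\|_1$.

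Next I would pass from coefficients to tangent vectors: setting $v_i(t)=\sum_{j=1}^{k_n} c_{ij} b_j(t)$ and using that the basis elements $b_j$ are orthonormal and (assumed) uniformly bounded by some $B$, I get both $\|v_1-v_2\|_2 \le \|c_1-c_2\|_2 \le \|c_1-c_2\|_1$ and $|v_1(t)-v_2(t)| \le B\,\|c_1-c_2\|_1$ pointwise. Now from the explicit formula
\[
q_i(t) \;=\; \cos(\|v_i\|)\,{\bf 1} + \frac{\sin(\|v_i\|)}{\|v_i\|}\,v_i(t),
\]
the map $v \mapsto q$ is smooth on the bounded set $V_\pi^J$ where $\|v_i\| < \pi/4$ (so $\sin(\|v_i\|)/\|v_i\|$ is bounded above and below, away from singularities). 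Applying the mean value theorem componentwise (to $\cos\|v\|$ using $|\cos\|v_1\|-\cos\|v_2\||\le\|v_1-v_2\|_2$, and to the ratio factor), a routine calculation gives a pointwise Lipschitz bound
\[
|q_1(t) - q_2(t)| \;\le\; M_1\bigl(\|v_1-v_2\|_2 + |v_1(t)-v_2(t)|\bigr) \;\le\; M_2\,\|c_1-c_2\|_1.
\]

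Pushing forward, $|\dot\gamma_1(t)-\dot\gamma_2(t)| = |q_1(t)+q_2(t)|\cdot|q_1(t)-q_2(t)| \le M_3\|c_1-c_2\|_1$ since $q_i$ is uniformly bounded on $V_\pi^J$. Integrating gives $|\gamma_1(t)-\gamma_2(t)| \le \int_0^t|\dot\gamma_1-\dot\gamma_2|\,ds \le M_3\|c_1-c_2\|_1$, exactly as in Proposition 2. Substituting both bounds into the initial decomposition yields $|f_1(t)-f_2(t)| \le M_0\,\|c_1-c_2\|_1$ for an $M_0$ depending only on $L$, $\|f_p\|_\infty$, $B$, and the diameter of $V_\pi^J$.

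The main obstacle is the pointwise (not just $L^2$) control of $q_1-q_2$: the sphere-to-tangent identification is naturally isometric in $L^2$, but here we need sup-norm estimates to feed into a pointwise bound on the density. This is what forces the assumption that the basis $\{b_j\}$ has uniformly bounded $L^\infty$ norm, and it is also where the restriction to $V_\pi^J$ (ensuring $\|v_i\|<\pi/4$, so that $\sin(\|v_i\|)/\|v_i\|$ stays bounded away from $0$ and $\infty$) becomes essential for the Lipschitz constant of the exponential map to exist.
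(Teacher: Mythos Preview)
Your proposal is correct and follows essentially the same route as the paper: bound $v_1-v_2$ pointwise via uniform boundedness of the basis, push through the Lipschitz maps $\cos(\cdot)$ and $\sin(\cdot)/\cdot$ to control $q_1-q_2$, factor $\dot\gamma_1-\dot\gamma_2=(q_1+q_2)(q_1-q_2)$, integrate for $\gamma_1-\gamma_2$, and finish with the same add-and-subtract decomposition of $f_p(\gamma_1)\dot\gamma_1 - f_p(\gamma_2)\dot\gamma_2$ using Lipschitzness of $f_p$. Your explicit emphasis on needing sup-norm (not just $L^2$) control of $q_1-q_2$, and the role of $V_\pi^J$ in keeping the exponential map Lipschitz, is exactly what the paper uses (somewhat less explicitly) as well.
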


\begin{proof}
Let $c_1$ and  $c_2$ be the coefficients associated with two elements $v_1$ and $v_2$ of $T_{{\bf 1}}(\s_{\infty})$, defined in Section $2$ of the manuscriptand let $q_1$ and $q_2$ represent the corresponding elements on the Hilbert sphere.  Then there exists $  M_1 \in \real$ such that $|B_{i}| < M_1 $ , where $B_{i}$ is the $i$th basis function, $i=1,2,\cdots,k_n$. Let 
$
v_1 =\sum_{i=1}^{k_n} c_{1i}B_{i},$ $ v_2 =\sum_{i=1}^{k_n} c_{2i}B_{i}.  
$
Then $v_1,v_2 \in T_{{\bf 1}}(\s_{\infty})$  with $\|v_1\|<\pi/4$ and $\|v_2\|<\pi/4$.  Hence we have 
\begin{eqnarray*}
(v_1 - v_2 ) (t) =\sum_{i=1}^{k_n}(c_{1i} -c_{2i} ) B_{i}(t) <M_1 \sum_{i=1}^{k_n}| c_{1i} - c_{2i} |=M_1{\|c_1 -c_2\|}_1
\end{eqnarray*}
\begin{eqnarray*}
 \|v_1 - v_2\|=\sqrt{\int_0^1 {(v_1 -v_2)}^{T} (v_1 - v_2) dt} <M_3\sqrt{ \sum_{i=1}^{k_n} {(c_{1i} -c_{2i} )}^2}<M_1  {\| c_1 - c_2 \|}_{1}
\end{eqnarray*}

Next since $x \mapsto \|x\|=\sqrt{\int_0^1 x^2(t)dt}$ and $x \mapsto \cos(x)$ are Lipschitz continuous, we have
\begin{equation}
| \cos \|v_1\| - \cos \|v_2\| | < M_{2}| \|v_1\| - \|v_2\|| < M_1  {\| c_1 - c_2 \|}_{1}
\label{eq:cosabs}
\end{equation}
Next note that $x \mapsto \sin(x) /x $ is Lipschitz continuous. Hence we have
\begin{equation}
\norm{\frac{\sin \|v_1\|}{\|v_1\|} - \frac{\sin \|v_2\|}{\|v_2\|}} < M_{2}|\|v_1\| - \|v_2\| |< M_1 {\| c_1 - c_2 \|}_{1}
 \label{eq:sinxbyx}
\end{equation}

Noting that 
\begin{eqnarray*}
|q_1 (t)-q_2(t)| <\abs{\cos\|v_1\| - \cos \|v_2\|} +\abs{\frac{\sin \|v_1\|}{\|v_1\|} v_1 (t)- \frac{\sin\|v_2\|}{\|v_2\|}v_2(t)}
\end{eqnarray*} we have, combining equations \ref{eq:cosabs} and \ref{eq:sinxbyx},
\begin{equation}
\norm{q_1 -q_2}_{1} < M_1 \norm{c_1 - c_2}_{1}
\label{gamdot}
\end{equation}
Now consider $Q=q^2$. Observe that 
\begin{eqnarray*}
(Q_1 -Q_2)(t) &=& {q_1}^{2} (t)- {q_2}^{2}(t)=(q_1(t) - q_2(t))(q_1(t) + q_2(t))\\
&=& (\cos |v_1| + \cos |v_2| + \frac{\sin |v_1|}{|v_1|} v_1 (t) + \frac{\sin|v_2|}{|v_2|}v_2(t))(q_1(t) - q_2(t)).
\end{eqnarray*}
Now   $(\cos \|v_1\| + \cos \|v_2\| + \frac{\sin \|v_1\|}{\|v_1\|} v_1 (t) + \frac{\sin \|v_2\|}{\|v_2\|}v_2(t))$ is a bounded function. Hence ${\|Q_1 -Q_2\|}_{1} < M_1\norm{c_1 - c_2}_{1}$ using equation \ref{gamdot}.  Now we have $\gamma_i(t)= \int_{0}^{t} Q_i(u) du$, $t \in [0,1],i=1,2$. Then 
\begin{eqnarray*}
|\gamma_1 (t) -\gamma_2 (t)| = \abs{\int_{0}^{t}\bigg(Q_1(u) - Q_2 (u)\bigg)du} < \int_{0}^{t} |Q_1(u) - Q_2 (u) | du  \leq \norm{Q_1 - Q_2}_{1}
\end{eqnarray*}
 
Since $f_p$ is Lipschitz continuous and strictly positive density on $[0,1]$, we have
\[
{\|f_p (\gamma_1) -f_p (\gamma_2) \|}_{1} <M_4 {\|\gamma_1 -\gamma_2\|}_{1}
\]
Consider $|f_1 - f_2| =|f_p (\gamma_1). \dot{\gamma}_1- f_p (\gamma_2). \dot{\gamma}_2|$. 
Keeping in mind that $Q=\dot{\gamma}$, we have
\begin{eqnarray*}
|f_1(t)-f_2(t)| &=& |f_p(\gamma_1 (t)).Q_1(t)-f_p(\gamma_2 (t)).Q_2(t)| \\
&=& |f_p(\gamma_1 (t)).Q_1(t)-f_p(\gamma_2 (t)).Q_1(t)+f_p(\gamma_2 (t)).Q_1(t)-f_p(\gamma_2 (t)).Q_2(t)| \\
&\leq& |Q_1(t)|M_1{\|\gamma_1 - \gamma_2\|}_{1} + |f_p(\gamma_2(t))|{\|Q_1 - Q_2\|}_{1}\\
&\leq & M_{2}{\|\gamma_1 - \gamma_2\|}_{1} + M_{3}{\|\gamma_1 - \gamma_2\|}_{1} <M_0 \norm{c_1 - c_2}_{1}.
\end{eqnarray*}

Therefore we have $|f_1 - f_2|< M_0 \norm{c_1 - c_2}_{1}$ for some fixed $M_0>0$.
\end{proof}

{\bf Remark 1}:$H(f_1,f_2) <M_1\sqrt{{\|f_1-f_2\|}_1} <M_1\sqrt{ \norm{c_1 - c_2}_{1}}<l_1\sqrt{ \norm{c_1 - c_2}_{\infty}}$ for some fixed $l_1>0$ where $H(f_1,f_2)$ is the Hellinger metric between two densities $f_1$ and $f_2$.

\subsection{Proof of Lemma 1 and Corollary 1}
Let us consider a fixed $f_0 =f_p (\Gamma(c_0)). \dot{\Gamma}(c_0)$.
We note that $H(f_1,f_2)\leq l_1 \sqrt{{\|c_1 -c_2\|}_{\infty}}$ for some $l_1>0$ following the steps in section \ref{equivproof}.
So finding a $\delta$ covering for $\mathscr{F}_n$ is equivalent to finding an $l_1\sqrt{\delta}$ covering for the space of coefficients in the tangent space using $L_\infty$ norm.  Let us have a closer look at the space of coefficients. We have $\|v\| < \pi/4$ for tangent space representation of $\Gamma$, which is equivalent to ${\|c\|}_2 \leq l_3$,say.
Therefore $\mathscr{F}_n \equiv \{c \in {\mathbb{R}}^{k_n}:{\|c\|}_2 \leq l_3\}=\mathscr{C}$,say. 
Then $\mathscr{C} \subset \{c \in {\mathbb{R}}^{k_n}:{\|c\|}_\infty \leq l_4\} \equiv \{c \in {\mathbb{R}}^{k_n}:{|c_i|} \leq l_4 \forall i=1,\dots, k_n\}=\mathscr{C}_0$,say.
Now $\mathscr{C}_0$ is a compact set with $\mathscr{C}$ as a compact subset. Therefore the covering number N for $\mathscr{C}$ would be less than the covering number for $\mathscr{C}_0$.
Since $\mathscr{C}_0 \equiv \{{[-l_4,l_4]}^{k_n}\}$, we have
 the covering number for $\mathscr{C}_0$ as ${(\frac{2l_4}{l_1\sqrt{\delta}})}^{k_n}$. We obtain this by partitioning the interval$[-l_4,l_4]$ into pieces of length $l_1\sqrt{\delta}$ for each coordinate so that the partition of $\mathscr{C}_0$ is reached through cross product. Then in each equivalent class of the partition of  $\mathscr{C}_0$ we will have ${\|c_1 -c_2\|}_{\infty}\leq l_1\sqrt{\delta}$ which is equivalent to $H(f_1 ,f_2)\leq \delta$.
So we have the metric entropy for $\mathscr{F}_n=H(.,\mathscr{F}_n)=H(u,\mathscr{F}_n)<k_n \log{l/u}$, where $l=2l_4$ and $u=l_1\sqrt{\delta}$.
Now,
\[
\int_{{\epsilon}^2/2^8}^{\sqrt{2}\epsilon} {H}^{1/2} (\frac{u}{l_3},\mathscr{F}_n) du\leq \sqrt{k_n}\int \sqrt{\log(l_0/u)} du \leq \sqrt{k_n\log(M/{\epsilon}^2)} (\sqrt{2}\epsilon -{\epsilon}^2/256)
\]
where $l_0=l_3l$ and $M=2^8l_0$.
For the existence of an $\epsilon_n$ that satisfies Lemma $1$ we need an $\epsilon_n$ less than $1$ that satisfies
\begin{equation}
\sqrt{k_n\log(M/{\epsilon}^2)} (\sqrt{2}\epsilon -{\epsilon}^2/256) \leq C_4 n^{1/2} {\epsilon}^2
\label{satisfiesprop}
\end{equation}

But this inequality holds at $1-$ and hence there exists a smallest $\epsilon_n <1$ that satisfies \ref{satisfiesprop}. The corollary follows directly from Theorem $1$ in  \citet{wong1995probability}

\subsection{Proof of Lemma 2}
%

Consider $\alpha=1$ in \eqref{eq:main}. $\delta_n (1) =\inf _{f \in \mathscr{F}_n} \rho_{1} (p_0,f)=\inf _{f \in \mathscr{F}_n} \int p_0g_{1}(p_0/f).=\inf _{f \in \mathscr{F}_n} \int \frac{{(p_0 -f)}^2}{f}$.  Let $P_0$ and $F_p$ be the cdfs corresponding to the true density and the initial parametric estimate respectively. Then we have $\gamma_0={F_p}^{-1}\circ P_0 $ has the tangent space representation $v_0$ obtained via exponential map of $\sqrt{\dot{\gamma_0}}$ satisfying $\|v_0\| <\pi/4$. This forces $ \cos (\|v_0\|) + \frac{\sin(\|v_0\|)}{\|v_0\|}v_0$ to be always positive. Let $f$ be the final density estimate and $c_2$ be the corresponding coefficient vector in the tangent space representation and $v_1$ be the corresponding element in the tangent space.
Now  we have $\|v_1\|<\pi/4$ corresponding to $f$ because $c_2 \in V_{\pi}^{k_n}$ following the notation $V_{\pi}^{J}$ introduced in Section $2$ of the manuscript.
That implies $\cos (\|v_1\|) + \frac{\sin(\|v_1\|)}{\|v_1\|}v_1>0$, i.e. 
  \[
\dot{\gamma}(t)={\big( \cos (\|v_1\|) + \frac{\sin(\|v_1\|)}{\|v_1\|}v_1}\big)^2(t) >0\text{  } \forall t\in [0,1]
 \]
 Also $\dot{\gamma}(t)$ is continuous in $t$ on a closed and bounded interval. So it attains its minima at some point $t_0$ such that $\dot{\gamma}(t) \geq \dot{\gamma}(t_0) > 0$  for all  $t\in [0,1]$.
 Thus it follows that  $f(t) > M_1 \dot{\gamma}(t_0) = d,\text{say}$.    Then we have, $\delta_n (1) = \inf _{f \in \mathscr{F}_n} \int \frac{{(p_0 -f)}^2}{f} dt < \inf _{f \in \mathscr{F}_n} {\|p_0 - f\|}_{\infty}^2/d = C_5 n^{\frac{-2\beta}{2\beta +1}}$ for some $C_5>0$.  

\subsection{Proof of Theorem 1}
 
We have from equation \ref{satisfiesprop} $\sqrt{k_n\log(M/{\epsilon}^2)} (\sqrt{2}\epsilon -{\epsilon}^2/256) < \sqrt{k_n\log(M/{\epsilon}^2)}\sqrt2 \epsilon$.
So for an upper bound of the smallest root we can solve the equation  $\sqrt{k_n\log(M/{\epsilon}^2)}\sqrt2 \epsilon=C_4 n^{1/2} {\epsilon}^2$.   
Let $\epsilon_n$ be of the form $\sqrt{M}n^{-\gamma}{(\log n)}^t,\gamma >0$, and, let $k_n=n^{\Delta}$, $\Delta <1$  
Then 
$\log{(M/{{\epsilon_n}^2})}=2\gamma \log{n} -2t \log{\log{n}} \leq 2\gamma\log{n}$.

So for an upper bound of the smallest root we can solve the equation 

  $\sqrt{k_n 2\gamma\log{n}} \sqrt{2} \epsilon=C_4 n^{1/2} {\epsilon}^2$.   
  
Therefore equating, $n^{\Delta /2}\sqrt{2\gamma}\sqrt{\log{n}}\sqrt{M}n^{-\gamma}{(\log{n})}^t$ with $C_4Mn^{1/2}n^{-2\gamma}{(\log{n})}^{2t}$,
 we get $\gamma=\frac{1}{2} (1-\Delta)$, and $t=1/2$.
 Thus we have $\epsilon_n = \sqrt{M}n^{\frac{-(1-\Delta)}{2}}\sqrt{\log n}$.
 We take $\Delta$ to be $\frac{1}{2\beta + 1}$ to use the theoretical properties of H\"{o}lder space of order $\beta>0$.  
 Therefore $\epsilon_n = \sqrt{M}n^{\frac{\beta}{2\beta+1}}\sqrt{\log n}$ is an upper bound for the smallest value that satisfies the condition for Lemma $1$.   Therefore, using the definition given in Theorem $4$ in \citet{wong1995probability}, and using $\alpha=1$, we get
\[
\epsilon_n^*  =\left\{\begin{array}{lr}Mn^{-\beta/(2\beta +1)}\sqrt{\log n}, \text{   if  } \delta_n (1) < \frac{1}{4}C_1M^2n^{-2\beta/(2\beta +1)}\log n,\\
{(4\delta_n(1)/C_1)}^{1/2},\text{ otherwise.}
\end{array}
\right.
\]
But $\delta_n (1)=C_5n^{-2\beta/(2\beta +1)} < \frac{1}{4}C_1M^2n^{-2\beta/(2\beta +1)}\log n$ for $n > \exp(4C_5/M^2C_1)$. Thus for large enough $n$, $\epsilon_n^*=Mn^{-\beta/(2\beta +1)}\sqrt{\log n}$ and following Theorem $4$ of \citet{wong1995probability} we get 
\begin{eqnarray}\label{eq:main}
 P({\|q^{1/2}-p_{0}^{1/2}\|}_2 \geq \epsilon_n^* ) \leq 5\text{exp}\big(-C_2n{(\epsilon_n^* )}^2\big) + \text{exp}\big(-\frac{1}{4}n\alpha C_1{(\epsilon_n^* )}^2\big).  
\end{eqnarray}

\section{Estimation Algorithm}

In this section we outline the estimation procedure and discuss some of the
implementation issues. We discretize density functions using a dense uniform partition,  $T=100$ equidistant points  over the interval $[0,1]$.
For approximating derivatives of a function, 
for example $\dot{\gamma}$ for a warping function $\gamma$, we use the first-order differences. The integrals are approximated 
using the trapezoidal method.

For optimizing log-likelihood function according to Equation 2.5 of the manuscript,  we
use the function {\it fminsearch} in \texttt{MATLAB} for our experiments. 
The {\it fminsearch} function uses a very efficient grid search technique to find the optimal values of 
coefficients $\{c_j\}$, corresponding to the chosen basis elements, to approximate the optimal warping function $\gamma$. 
However, {\it fminsearch} function can get stuck in locally-optimal solutions
in some situations. To alleviate this problem we use an iterative,  multi-resolution approach as follows. We start the optimization using a 
small number of basis elements $J$ with $c = {\bf 0}$, the point that maps to $\gamma_{id} \in \Gamma$ under $H^{-1}$. This implies a low-resolution search and 
low-dimensional search space $\real^J$. Then, at each successive iteration we increase the resolution by increasing $J$ and 
use the previous solution as the initial condition (with the additional components
set to zero) for the next stage. This slow increase in $J$, while continually  improving the optimal point $c$, performs much better
in practice than using a large value of $J$ directly in {\it fminsearch}.

Another important numerical issue is the final choice of $J$. 
For a fixed sample of size $n$, a large value of $J$ may lead to overfitting and $\hat{f}$ being a rough function. 
Also, a large value of $J$ makes it harder for the search procedure to converge to an optimal solution. 
\citet{efromovich2010orthogonal} and the references there in discusses different data-driven methods to choose the number of basis elements, by considering the number of basis elements itself as a parameter. We take a different  data-driven approach for selecting the desired number of basis elements.  
Using a predetermined maximum number of basis points, we navigate through increasing number of basis elements and at each step, we compute the value of the Akaike's Information Criterion (AIC) and choose the number of basis elements that results in the best value of the AIC, penalizing the number of basis functions used.   We summarize the full procedure in {\bf Algorithm 1}.
\begin{algorithm}
\caption{Improving solutions using {\it fminsearch} by tweaking the starting points}
i. Start with a low number of basis elements, say $J$ \\
ii. Use {\textbf 0} vector as the starting point and find the solution {\textbf d} using {\it fminsearch}.\\
iii. Increase the number of basis elements, say $J_1$ more basis elements.\\
iv. Use {\bf [0,0]} and {\bf [d,0]} as two starting points. Compare the AIC for the two cases and choose the solution with better AIC value. Call the solution {\textbf d} the optimal solution.\\
v. If the number of basis elements exceeds a predetermined large number, stop. Else go to step iii.
\end{algorithm}

Experimental results show that Bayesian Information Criterion (BIC) overpenalizes the number of basis elements used and, therefore, some 
sharper features of the true density are lost in the estimate. So 
the experiments presented in the following sections use only the AIC penalty.
\label{alg}
\section{Simulation Studies}
Next, we elaborate on the results from experiments on univariate unconditional density estimation procedure 
involving two simulated datasets, from Section 5 in the manuscript. 
The computations described here
are performed on an Intel(R) Core(TM) i7-3610QM CPU processor laptop, and the 
computational times are reported for each experiment. 
We compare the proposed solution with two standard techniques: (1) 
 kernel density estimates with bandwidth selected by unbiased cross validation method, henceforth referred to as {\it kernel(ucv)}, 
 (2) a  standard Bayesian technique using the  function {\it DPdensity} in the R package \texttt{DPPackage}. 
 The Bayesian approach naturally has a longer run-time.  For both the simulated examples, we use $2000$ MCMC runs with $500$ iterations as burn in period for the Bayesian technique.
We compare the methods both in terms of numerical performance and computational cost. 
Here we illustrate the performance of the various methods using a representative simulation. 
We highlight the performance improvement over an (misspecified) initial parametric and nonparametric density estimate brought about by warping. 
For the initial parametric estimate we have chosen a normal density truncated to $[0,1]$ with mean and standard deviation estimated from the sample. For the initial nonparametric estimate, we used inbuilt \texttt{MATLAB} function {\it ksdensity}.

\subsection{Example 1}
We borrow the first example from \citet{tokdar2007towards} and \citet{lenk1991towards},
where $p_0 \propto 0.75 \text{exp}(\text{rate}=3) + 0.25 {\cal N}(0.75,2^2)$, 
a mixture of exponential and normal density truncated to the interval $[0,1]$: 
We generate $n=100$ observations to study estimation performance. 
Here we use Meyer wavelets as the basis set for the tangent space representation of $\gamma$s. We use an ad hoc 
choice of $J=15$ basis elements to approximate the tangent space. Also, we  use an unpenalized log likelihood for optimization.

\begin{figure}[h]
\begin{center}
\begin{tabular}{cc}
\includegraphics[width=2.7in]{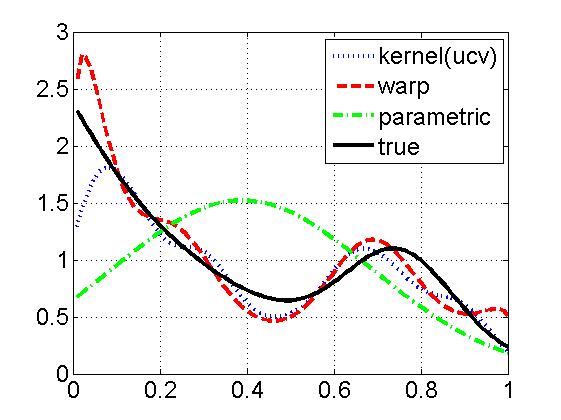} &
\includegraphics[width=2.7in]{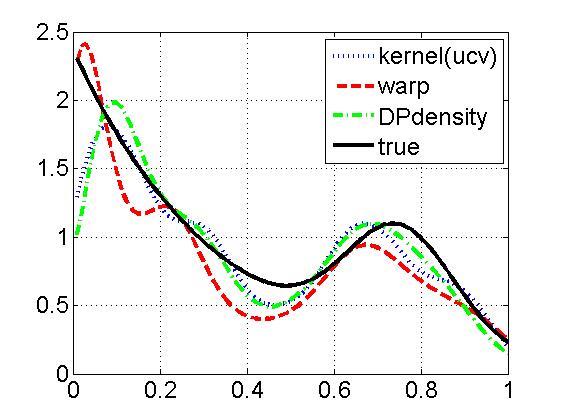} 
\end{tabular}
\caption{\it The left panel compares the warped estimate $\hat{f}$ with other estimates when $f_p$ is parametric. 
The middle panel shows the corresponding evolution of the negative of log-likelihood function during 
optimization. The right figure compares the warped estimate with others when $f_p$ is {\it ksdensity}.}
\label{fig:tokdar}
\end{center}
\end{figure}
Figure \ref{fig:tokdar} (left panel) shows a substantial improvement in the final warped estimate over the initial parametric estimate.
Incidentally, it also does a better job in capturing the left peak as compared to the {\it kernel(ucv)} method. Standard kernel methods need additional boundary correction techniques to be able to capture the density at the boundaries, as studied in \citet{karunamuni2008some} and the references therein. However the warped density seems to perform better estimation near the boundaries compared to the other techniques.
The right panel displays 
the warped result when using 
{\it ksdensity} output as the initial estimate. It also provides solutions obtained using 
{\it kernel(ucv) } and {\it DPdensity}. 
Once again,  this warped estimate provides a substantial improvement over the initial solution.

\subsection{Example 2}
For the second example we take Example 10 from \citet{marron1992exact}, which uses 
a claw density: $p_0  = \frac{1}{2}{\cal N}(0,1) + \sum_{l=0}^{4} \frac{1}{10}{\cal N}(\frac{l}{2} -1, {(0.1)}^2)$. 
\begin{figure}[h!]
\begin{center}
\begin{tabular}{cc}
\includegraphics[width=2.6in]{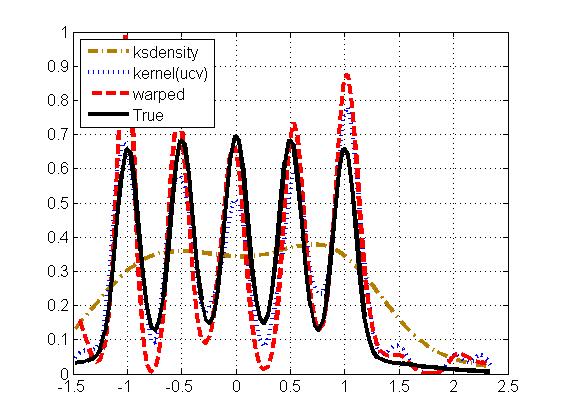} &
\includegraphics[width=2.6in]{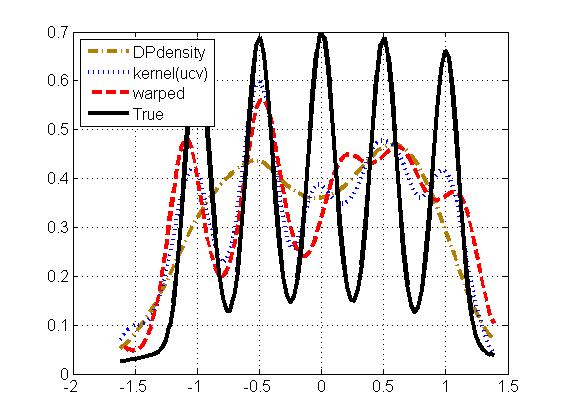}
\end{tabular}
\caption{\it The left panel shows the improvement over initial {\it ksdensity} estimate. Both {\it kernel(ucv)} and warped estimate have a good performance here. The right panel shows that all the methods fail to capture all the peaks. {\it Kernel(ucv)} performance is very similar to the warped estimate.}
\label{fig:clawpic}
\end{center}
\end{figure}
We estimate the domain boundaries and unlike the previous example, instead of fixing the number of tangent basis elements, 
we employ Algorithm 1 described in Section \ref{alg} to find the optimal number of basis elements based on the AIC, with a maximum allowed value of $40$ basis elements. 
Consequently,  the computation cost goes up.
\par
\vskip 14pt
\noindent {\large\bf Acknowledgements}

This research was supported in part by the NSF grants to AS --  NSF DMS CDS\&E 1621787 and NSF CCF 1617397.
\par


\bibhang=1.7pc
\bibsep=2pt
\fontsize{9}{14pt plus.8pt minus .6pt}\selectfont
\renewcommand\bibname{\large \bf References}
\expandafter\ifx\csname
natexlab\endcsname\relax\def\natexlab#1{#1}\fi
\expandafter\ifx\csname url\endcsname\relax
  \def\url#1{\texttt{#1}}\fi
\expandafter\ifx\csname urlprefix\endcsname\relax\def\urlprefix{URL}\fi

\bibliographystyle{plainnat}
\bibliography{densityref}

\vskip .65cm
\noindent
Florida State University
\vskip 2pt
\noindent
E-mail: (s.dasgupta@stat.fsu.edu)
\vskip 2pt

\noindent
Texas A\&M University
\vskip 2pt
\noindent
E-mail: (debdeep@stat.tamu.edu)
\vskip 2pt

\noindent
Florida State University
\vskip 2pt
\noindent
E-mail: (anuj@stat.fsu.edu)
\vskip 2pt
\end{document}